\newcommand{\C}{\mathcal{C}}
\newcommand{\A}{\mathcal{A}}
\newcommand{\F}{\mathbb{F}}
\newcommand{\N}{\mathbb{N}}
\newcommand{\T}{\mathcal T}
\newcommand{\m}{\mathfrak{m}}
\newcommand{\Z}{\mathbb{Z}}
\newcommand{\Zr}{\mathbb{Z}_{p^r}}
\DeclareMathOperator{\GR}{GR}
\theoremstyle{definition}
\newtheorem{theorem}{Theorem}[section]
\newtheorem{definition}[theorem]{{{Definition}}}
\newtheorem{example}[theorem]{{{Example}}}
\newtheorem{notation}[theorem]{{{Notation}}}
\newtheorem{remark}[theorem]{{{Remark}}}
\newtheorem{proposition}[theorem]{{{Proposition}}}
\newtheorem{lemma}[theorem]{{{Lemma}}}
\definecolor{light-gray}{gray}{0.90}
\newcommand{\julia}[1]{{\color{blue} \sf $\star\star$ \underline{Julia}: [#1]}}
\title[MDP convolutional codes over finite chain rings]{\textbf{Convolutional codes over finite chain rings, MDP codes and their characterization}}
\author[G. N. Alfarano]{Gianira N. Alfarano}
\address{Institute of Mathematics, University of Zurich, Switzerland}
\curraddr{}
\email{gianiranicoletta.alfarano@math.uzh.ch}
\author[A. Gruica]{Anina Gruica}
\address{Department of Mathematics and Computer Science, Eindhoven University of Technology, the Netherlands}
\email{a.gruica@tue.nl}
\author[J. Lieb]{Julia Lieb}
\address{Institute of Mathematics, University of Zurich, Switzerland}
\email{julia.lieb@math.uzh.ch}
\author[J. Rosenthal]{Joachim Rosenthal}
\address{Institute of Mathematics, University of Zurich, Switzerland}
\email{rosenthal@math.uzh.ch}
\subjclass[2020]{15B33, 94B10, 15B05, 11T71}
\keywords{Convolutional codes, finite chain rings, MDP convolutional codes, superregular matrices.}
\date{}
\begin{document}

\maketitle

\begin{abstract}
In this paper, we develop the theory of convolutional codes over finite commutative chain rings. In particular, we focus on maximum distance profile (MDP) convolutional codes and we provide a characterization of these codes, generalizing the one known for fields. Moreover, we relate (reverse) MDP convolutional codes over a finite chain ring  with  (reverse) MDP convolutional codes over its residue field. Finally, we provide a construction of (reverse) MDP convolutional codes over finite chain rings generalizing the notion of (reverse) superregular matrices.
\end{abstract}

\section{Introduction}
A convolutional code is usually defined as an $\F[z]$-submodule of $\F[z]^n$, where $\F[z]$ denotes the polynomial ring over a finite field $\F$. This family of codes has been exhaustively studied in the past decades because of their flexibility of grouping blocks of information in an appropriate way, according to the erasures location.
Contrary to the block codes case, where the minimum distance plays a crucial role, for convolutional codes, the notion of \textbf{$j$-th column distances} allows to study some decoding properties of the codes. In particular, if as many of these column distances as possible reach the known upper bound for their value, then the code is called \textbf{maximum distance profile} (MDP) convolutional code. It was shown in \cite{virtu2012} that over the erasure channel 
MDP convolutional codes have optimal recovery rate for windows of a certain length, depending on the code parameters. Constructions of MDP convolutional codes over finite fields were proposed in several works; see for example \cite{alfarano2020weighted, li17, stronglymds}.

Codes over finite rings have been analyzed since the 1970s, in particular over the ring of integers modulo $m$. Their study was initiated by Blake, who generalized the notions of Hamming codes, Reed-Solomon codes and BCH codes over arbitrary integer residue rings; see \cite{blake1972codes, blake1975codes}. Spiegel continued the study of BCH codes involving group algebras over rings of $p$-adic integers; see~\cite{spiegel1978codes}. 
At the end of the 90's, in \cite{norton2000structure} Norton and S\v{a}l\v{a}gean generalized the structure theorems of Calderbank and Sloane provided in \cite{calderbank1995modular} for
linear and cyclic codes over the ring of integers modulo $p^r$, which we denote by $\Z_{p^r}$, to a finite chain ring. The structure theorems proposed in their work have been used to prove results on the Hamming distance of codes over a finite chain ring in \cite{hammingrings}, where there is a particular focus on the class of cyclic codes. Cyclic codes over finite chain rings were further investigated in \cite{dinh2004cyclic}. 

Convolutional codes over rings were proposed as appropriate codes for phase modulation over the ring of integers modulo $m$; see \cite{baldini1987systematic, baldini1990coded, massey1989convolutional}. When considered over rings, convolutional codes can have properties which are different from  convolutional codes over fields; these differences depend on the structure of the underlying ring. Fagnani and Zampieri were the pioneers in the study of the theory of convolutional codes over the ring $\Z_{p^r}$, in the case when the input sequence space is a free module; see \cite{fagnani2001system}. The problem of constructing convolutional codes over $\Z_{p^r}$ has been also well-studied in the last decade.

In this paper, we are interested in presenting the theory of convolutional codes over finite (commutative) chain rings. In this setting, which is very general and natural, we relate the algebraic properties of chain rings to the known results on convolutional codes over $\Z_{p^r}$, generalizing them in a natural way. Given a finite chain ring $R$, we define convolutional codes as $R[z]$-submodules of $R[z]^n$, not necessarily free. When the submodule is \emph{delay-free}, we are able to provide a characterization of MDP convolutional codes over $R$, generalizing the results provided in \cite[Theorem 2.4]{stronglymds}. 
We also relate MDP convolutional codes defined over $R$ to MDP convolutional codes defined on the residue field $R/\mathfrak{m}$, where $\mathfrak{m}$ is the unique maximal ideal of $R$. This allows us to obtain constructions of (reverse) MDP convolutional codes over $R$ from existing constructions for convolutional codes over finite fields. Moreover, we propose a definition of \emph{superregularity} which extends the known one for matrices defined over fields. Some particular constructions of (reverse) MDP convolutional codes over finite chain rings are then provided and some examples are presented.

This paper is based on the master’s thesis of the second author \cite{Gruica} and in this paper we extend the results which were originally stated there.  
While completing the manuscript, we came across the recent paper \cite{napp2021noncatastrophic}, in which noncatastrophic convolutional codes over $\Z_{p^r}$ are considered and a characterization of MDP convolutional codes is provided.
As we state in Remark \ref{noncatastrophic}, noncatastrophic codes are always delay-free. 
Moreover, noncatastrophic codes over $\Z_{p^r}$ are always free \cite{napp2021noncatastrophic} and in Remark \ref{rem:delayfree-free} we compare the notions free and delay-free and show that none of the two definitions implies the other.
This implies that our results are in two ways more general than the corresponding results in \cite{napp2021noncatastrophic} since we consider delay-free codes over finite (commutative) chain rings, which are a generalization of the integer residue rings $\Z_{p^r}$.

The paper is organized as follows.
In Section \ref{sec:prel}, we present some preliminaries on finite (commutative) chain rings and modules over such rings. In Section \ref{sec:conv}, we establish the notation for the rest of the paper and we introduce the notion of convolutional codes over chain rings. In Section \ref{sec:MDP}, we investigate MDP convolutional codes and provide a characterization for such codes. In Section \ref{sec:constr}, we present constructions for MDP and reverse MDP convolutional codes over chain rings, starting from an MDP convolutional code over a finite field or from a superregular matrix. We also consider the special case of convolutional codes over integer residue rings in more detail.

\section{Preliminaries}\label{sec:prel}
We start by introducing the necessary background for the setting of our main object of interests, namely convolutional codes over finite chain rings. We refer the interested reader to \cite{bini2012finite,nechaev2008finite,mcdonald1974finite} for more details on finite chain rings.

\subsection{Finite chain rings}
For convenience of the reader, we recall the basics of finite chain rings.

\begin{definition}
A \textbf{finite (commutative) chain ring} is a finite (commutative) ring with identity $1$ whose ideals form a chain under inclusion.  In particular, finite chain rings are local rings, i.e. they have a unique maximal ideal.
\end{definition}

In this paper all the considered rings are commutative if not otherwise specified. 

Many different rings are part of the family of chain rings. One of the most important classes of such rings is the one of Galois rings which includes, but is not limited to, finite fields and rings of the form $\Zr$ where $p$ is a prime number and $r\in\mathbb{N}$. 

Let $R$ be a finite chain ring and $\m$ be its maximal ideal. Since $R$ is a principal ideal ring, there exists an element $\gamma\in R$ such that $\m=\langle \gamma \rangle = \gamma R$.
Define $\nu$ to be the nilpotency index of $\gamma$. Then, there are exactly $\nu+1$ ideals in $R$, $R=\langle \gamma^0\rangle$, $\langle \gamma^1\rangle, \dots, \langle \gamma^\nu\rangle=\{0\}$, which form a chain:
$$R=\langle \gamma^0\rangle \supset \langle \gamma^1\rangle \supset \dots\supset  \langle \gamma^\nu\rangle=\{0\}. $$
\begin{definition}
A Galois ring   $\mathrm{GR}(p^r,s)$ is a finite extension of the ring of integers $\Z_{p^r}$ of degree $s$. In particular, it is a finite commutative chain ring of size $p^{rs}$ and characteristic $p^r$, which  is isomorphic to $\Z_{p^r}[z]/(f(z))$, where $f(z)$ is a monic polynomial of degree $s$, irreducible in $\Z_{p^r}[z]$. 
Note that for a Galois ring $\mathrm{GR}(p^r,s)$, the generator of the unique maximal ideal can always be chosen to be $p$ and $r$ is the smallest integer such that $\m^r=\{0\}$. 
\end{definition}

In general, the residue field $R/\m$ of $R$ is finite, hence it has cardinality $q$, where $q=p^h$ is a prime power. We can choose a canonical set of representatives for $R/\m$ as follows.

\begin{definition}
The unique set of representatives $\T$ for $R/\m$ of the form $\{0,1,\xi,\xi^2,\dots, \xi^{q-2}\}$, where $\xi\in R$ is called \textbf{Teichm\"uller set}. 
\end{definition}
Every element $a\in R$ can be uniquely represented as
\begin{equation}\label{eq:uniquedecompTS}
    a=\sum_{i=0}^{\nu-1}t_i\gamma^i, \quad t_i\in\T.
\end{equation}
Due to the uniqueness of \eqref{eq:uniquedecompTS}, it easily follows that the cardinality of $R$ is $q^\nu$.
For Galois rings, this representation is substituted with the $p$-adic expansion and $\langle \xi\rangle$ coincides with the unique cyclic subgroup of $R^\ast$ of order $p^s-1$. For integer residue rings $\mathbb Z_{p^r}$, the Teichmüller set is considered to be $\A_p:=\{0,1,\hdots,p-1\}$.

\begin{example}
Let $R:=\Z_{8}=\{0,1,\dots,7\}$. Its unique maximal ideal is $\m=\{0,2,4,6\} = \langle 2\rangle$ and $R/\m =\{0,1\} =\F_2$. Finally, $R^\ast =\{1,3,5,7\}$, hence its unique cyclic subgroup of order $1$ is $\{1\}$, so $\T = \{0,1\}$, with $\xi=1$. For example, the Teichm\"uller representation of $a=6$ is $a = 0\cdot \gamma^0 + 1\cdot \gamma^1 + 1\cdot \gamma^2$, where $\gamma=2$.
\end{example}

\begin{example}\textnormal{\cite[Example 3]{renner2020low}}
Let $p = 2, s = 3, r = 3$ and construct $R = \GR(8, 3)$. Consider the ring $\Z_8$, and $h(z) := z^3 +6z^2 +5z +7 \in \Z_8[z]$. The canonical projection of the polynomial $h(z)$ over $\F_2[z]$ is $z^3 + z + 1$ which is primitive in $\F_2[z]$. Thus, we have $R \cong \Z_8[z]/(h(z))$. The maximal ideal of $R$ is $\m = (2)$ and thus we can choose $\gamma = 2$. Moreover, if $\xi$ is a root of $h(z)$, then
we also have $R \cong \Z_8[\xi]$, and every element can be represented as $a_0 + a_1\xi + a_2\xi^2$, for
$a_0, a_1, a_2 \in \Z_8$. Observe that the polynomial $h(z)$ divides $z^7-1$ in $\Z_8[z]$ and therefore $\xi$ has order $7$, and the Teichm\"uller set is $\T = \{0, \xi, \xi^2, \dots, \xi^7= 1\}$.
\end{example}

\subsection{Modules over finite chain rings}
For preliminaries concerning modules over finite chain rings we refer to \cite{feng2014communication}.

Let $R$ be a finite chain ring. It is well-known that an $R$-module is isomorphic to a direct product of some ideals of $R$. This structure can be described by a \textbf{shape}.

\begin{definition}
A \textbf{$\nu$-shape} $\bf{M}_\mu=(\mu_1, \dots, \mu_\nu)$ is a sequence of non-decreasing integers, $0\leq \mu_1\leq \dots \leq \mu_\nu$, such that $\mu=\sum_{i=1}^\nu\mu_i$. For convention, we set $\mu_0=0$.
\end{definition}  
For every $\nu$-shape $\bf{M}_\mu$ define a finite $R$-module $R(\bf{M}_\mu)$ to be
\[
R(\bf{M}_\mu):=\underbrace{\langle 1\rangle \times \dots \times \langle 1\rangle }_{\text{$\mu_1$}}\times \underbrace{\langle \gamma \rangle \times \dots \times \langle \gamma\rangle }_{\text{$\mu_2-\mu_1$}}\times\dots\times \underbrace{\langle \gamma^{\nu-1} \rangle \times \dots \times \langle \gamma^{\nu-1}\rangle }_{\text{$\mu_\nu-\mu_{\nu-1}$}}.
\]
The converse is also true as it is stated by the following theorem.
\begin{theorem}\textnormal{\cite[Theorem 2.2]{honold2000linear}}\label{thm:uniqueshape} For any finite $R$-module $M$ over a finite chain ring $R$, there is a unique $\nu$-shape $\bf{M}_\mu$ such that $M \cong R(\bf{M}_\mu)$.
\end{theorem}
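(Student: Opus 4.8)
The plan is to deduce Theorem~\ref{thm:uniqueshape} from the structure theory of finite modules over the commutative local principal ideal ring $R$. I would establish \emph{existence} of a shape by induction on $|M|$, the engine being a splitting lemma that peels off one cyclic direct summand at a time, and \emph{uniqueness} by expressing the shape through genuine isomorphism invariants of $M$. For the setup, given $x\in M$ write $e(x)$ for its \emph{order}, the least $e\ge 0$ with $\gamma^{e}x=0$; since every ideal of $R$ is some $\langle\gamma^{i}\rangle$, the annihilator of $x$ is $\langle\gamma^{e(x)}\rangle$, so $Rx\cong R/\langle\gamma^{e(x)}\rangle$, and multiplication by $\gamma^{\nu-e}$ gives $R/\langle\gamma^{e}\rangle\cong\langle\gamma^{\nu-e}\rangle$. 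If $M\ne 0$, choose $x_{1}\in M$ of maximal order $e_{1}:=e(x_{1})\in\{1,\dots,\nu\}$; maximality forces $\gamma^{e_{1}}M=0$.

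The crux is the splitting lemma: $Rx_{1}$ is a direct summand of $M$. Using finiteness of $M$, pick a submodule $N$ maximal with respect to $N\cap Rx_{1}=0$, and suppose for contradiction $N+Rx_{1}\ne M$. As $\gamma$ is nilpotent, the nonzero module $M/(N+Rx_{1})$ contains a nonzero element killed by $\gamma$, i.e.\ there is $y\in M\setminus(N+Rx_{1})$ with $\gamma y=n+rx_{1}$ for some $n\in N$, $r\in R$. Applying $\gamma^{e_{1}-1}$ and using $\gamma^{e_{1}}y=0$ yields $\gamma^{e_{1}-1}n=-\gamma^{e_{1}-1}rx_{1}\in N\cap Rx_{1}=0$, hence $\gamma^{e_{1}-1}r\in\mathrm{Ann}(x_{1})=\langle\gamma^{e_{1}}\rangle$; writing $\gamma^{e_{1}-1}r=\gamma^{e_{1}}s$ gives $\gamma^{e_{1}-1}(r-\gamma s)=0$, and since $e_{1}-1<\nu$ and $R\setminus\m=R^{\times}$, this forces $r-\gamma s\in\m$, so $r=\gamma r'\in\m$. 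Then $y':=y-r'x_{1}\notin N+Rx_{1}$ and $\gamma y'=n\in N$. Now $N+Ry'\supsetneq N$, so by maximality $(N+Ry')\cap Rx_{1}\ne 0$: choose $a,b\in R$ and $n''\in N$ with $ay'+n''=bx_{1}\ne 0$. If $a=\gamma a_{1}\in\m$ then $ay'=a_{1}(\gamma y')\in N$, so $bx_{1}\in N\cap Rx_{1}=0$, a contradiction; hence $a\in R^{\times}$ and $y'=a^{-1}(bx_{1}-n'')\in Rx_{1}+N$, again a contradiction. So $M=Rx_{1}\oplus N$.

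Iterating the lemma (each complement $N$ is strictly smaller since $Rx_{1}\ne 0$) gives $M\cong\bigoplus_{k}R/\langle\gamma^{e_{k}}\rangle\cong\bigoplus_{k}\langle\gamma^{\nu-e_{k}}\rangle$ with each $e_{k}\in\{1,\dots,\nu\}$; letting $\mu_{k}$ be the number of summands isomorphic to some $\langle\gamma^{j}\rangle$ with $j<k$, the sequence $(\mu_{1},\dots,\mu_{\nu})$ is non-decreasing and satisfies $M\cong R(\mathbf{M}_{\mu})$, which gives existence. For uniqueness, I would compute directly from the definition of $R(\mathbf{M}_{\mu})$ that for every $k\in\{1,\dots,\nu\}$
\[
\dim_{R/\m}\bigl(\gamma^{\,\nu-k}M/\gamma^{\,\nu-k+1}M\bigr)=\mu_{k},
\]
because a summand $\langle\gamma^{j}\rangle$ contributes a one-dimensional $R/\m$-space to $\gamma^{i}M/\gamma^{i+1}M$ exactly when $i+j<\nu$. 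The left-hand side depends only on the isomorphism class of $M$, so the shape is uniquely determined.

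The main obstacle is the splitting lemma; everything else is bookkeeping together with the invariant computation, both routine. Within the lemma, the delicate points are choosing the complement correctly --- as a submodule maximal for meeting $Rx_{1}$ trivially --- and the order argument (using $e_{1}-1<\nu$ and $R\setminus\m=R^{\times}$) that pins down $r\in\m$. An alternative that bypasses the lemma is to invoke a Smith normal form for matrices over $R$, available because the ideals of $R$ form a chain so any finite presentation matrix of $M$ diagonalizes; but the inductive argument above is more self-contained.
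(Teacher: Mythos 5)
Your argument is correct. Note that the paper itself offers no proof of this statement --- it is imported verbatim as \cite[Theorem 2.2]{honold2000linear} --- so there is nothing internal to compare against; what you have written is a complete, self-contained derivation of the cited structure theorem. Your route is the standard one (the analogue of the finite abelian $p$-group argument): the splitting lemma is carried out correctly, with the two delicate points handled properly --- the deduction $r\in\mathfrak{m}$ from $\gamma^{e_1-1}(r-\gamma s)=0$ does require both $e_1-1<\nu$ and the fact that non-units lie in $\mathfrak{m}$, and the maximality of the complement $N$ is used exactly where it must be. The translation of the multiset of orders $\{e_k\}$ into the paper's $\nu$-shape is consistent with the definition of $R(\mathbf{M}_\mu)$ (the number of summands isomorphic to $\langle\gamma^j\rangle$ comes out as $\mu_{j+1}-\mu_j$), and the uniqueness argument via the invariants $\dim_{R/\mathfrak{m}}\bigl(\gamma^{\nu-k}M/\gamma^{\nu-k+1}M\bigr)=\mu_k$ is exactly the right way to see that the shape is determined by the isomorphism class. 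The only cosmetic point: you should note explicitly that $M=0$ corresponds to the all-zero shape, so the statement holds trivially in that degenerate case.
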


 For every matrix $A\in R^{m\times n}$, we define the \textbf{shape} of $A$ as the shape of the rowspace of $A$ (or equivalently as the shape of the column space of $A$, since they are isomorphic as $R$-modules).

\section{Convolutional codes over finite chain rings}\label{sec:conv}
In this section we introduce convolutional codes over finite chain rings. In order to do so, we need to provide some useful background on block codes defined over a finite chain ring. Almost all the results and definitions can be found in \cite{honold2000linear}.

\medskip
 
 \begin{notation}
From now on, $R$ will denote a finite commutative chain ring, $n$ is a positive integer, $\m=\langle \gamma\rangle$ is the unique maximal ideal of $R$ and $\nu$ its nilpotency index. Finally, we let $\T$ be the Teichm\"uller set of $R/ \mathfrak{m}$. Let $R[z]$ be the polynomial ring in the variable $z$ with coefficients in $R$ and $R[z]^n$ be the space of polynomial vectors of length $n$. Observe that there is a canonical isomorphism between $R[z]^n$ and the space of polynomials whose coefficients are vectors with entries in $R$, namely $R^n[z]$. Hence we use one or the other according to what is more convenient.
\end{notation}

\begin{definition}
Let $\left \{v_1(z), \dots, v_k(z)\right\} \subseteq R[z]^n$. Then \begin{align*}
    \sum_{i=1}^k t_i(z) v_i(z), \; t_i(z) \in \mathcal{T}[z] \text{ for } i = 1, \dots, k
\end{align*}
is said to be a \textbf{$\gamma$-linear combination} of $v_1(z), \dots, v_k(z)$. The set of all $\gamma$-linear combinations of $v_1(z), \dots, v_k(z)$ is called the \textbf{$\gamma$-span} of $\left \{v_1(z), \dots, v_k(z)\right\}$ and is denoted by 
$\gamma\textnormal{-}\mathrm{span}(v_1(z), \dots, v_k(z))$. 
\end{definition}

\begin{definition}
An ordered sequence of vectors $(v_1(z), \dots, v_k(z))$ in $R^n[z]$ is said to be a \textbf{$\gamma$-generator sequence} if 
\begin{itemize}
    \item[(i)] $\gamma v_i(z)$ is a $\gamma$-linear combination of $v_{i+1}(z), \dots, v_k(z)$ for any $i \in \{1,2,\dots, k-1\},$
    \item[(ii)] $\gamma v_k(z) = 0$.
\end{itemize}
We say that a $\gamma$-generator sequence is \textbf{$\gamma$-linearly independent} if there is no non-trivial $\gamma$-linear combination of its vectors that is $0$. A $\gamma$-linearly independent $\gamma$-generator sequence is called \textbf{$\gamma$-basis}. The number of elements in a $\gamma$-basis of an $R[z]$-submodule of $R[z]^n$ is the \textbf{$\gamma$-dimension} of a submodule.
\end{definition}

It has been proved in \cite{minimalringencoders} that two $\gamma$-bases of an $R[z]$-submodule of $R[z]^n$ have the same number of elements and therefore the number of elements in a $\gamma$-basis of a fixed submodule is an invariant, meaning that the concept of $\gamma$-dimension of a submodule is well-defined.

\subsection{Linear block codes over finite chain rings}
The notion of linear block coke over $\Z_{p^r}$ was introduced in \cite{hammingrings}.

\begin{definition}
A \textbf{(linear) block code} $\mathcal{C}$ of length $n$ over $R$ is an $R$-submodule of $R^n$. A \textbf{generator matrix} $\Tilde{G} \in R^{\Tilde{k}\times n}$ is a matrix whose rows form a minimal set of generators of the code $\mathcal{C}$ over $R$. If the rows of $\Tilde{G}$ are linearly independent over $R$, i.e. if the map induced by $\Tilde{G}$ is injective, then $\Tilde{G}$ is called an \textbf{encoder} of $\mathcal{C}$. If $\mathcal{C}$ admits an encoder then $\mathcal{C}$ is called a \textbf{free code.}
\end{definition}

\begin{definition}\label{def:Zpmodules}
Assume that the linear block code $\mathcal{C} \subseteq R^n$ has $\gamma$-dimension $k$. The matrix $G \in R^{k \times n}$ whose rows form a $\gamma$-basis of $\mathcal{C}$ is called \textbf{$\gamma$-encoder} of $\mathcal{C}$. The code then consists of all $\gamma$-linear combinations of the rows of $G$. Therefore we can describe the code $\mathcal{C}$ as
$$\mathcal{C} = \lbrace uG \in R^n \mid u \in \T^k \rbrace.$$
\end{definition}

\begin{remark}
For the special case of rings of the form $\mathbb Z_{p^r}$, one has $\gamma=p$. Therefore, we will also use the notions $p$-basis, $p$-encoder, etc. in this setting to be coherent with the notions in the literature on codes over $\mathbb Z_{p^r}$.
\end{remark}

\begin{definition}\cite{hammingrings}
A generator matrix $\Tilde{G}$ of a block code $\C$ is in \textbf{standard form} if after a suitable permutation of the coordinates we can write 

\begin{align}
\Tilde{G} = 
    \begin{bmatrix}
    I_{k_0} & A_{0,1}^0 & A_{0,2}^0 & A_{0,3}^0 & \dots & A_{0,\nu-1}^0& A_{0,\nu}^0 \\
    0 & \gamma I_{k_1} & \gamma A_{1,2}^1 & \gamma A_{1,3}^1 & \dots & \gamma A_{1,\nu-1}^1 & \gamma A_{1,\nu}^1 \\
    0 & 0 & \gamma^2I_{k_2} & \gamma^2A_{2,3}^2 & \dots & \gamma^2A_{2,\nu-1}^2 & \gamma^2A_{2,\nu}^2 \\
    \vdots & \vdots & \vdots & \vdots & \ddots & \vdots & \vdots \\
    0 & 0 & 0 & \dots & 0 & \gamma^{\nu-1}I_{k_{\nu-1}} & \gamma^{\nu-1}A_{\nu-1,\nu}^{\nu-1}
    \end{bmatrix},
\end{align}
where $I_{k_i}$ denotes the identity matrix of size $k_i$. Clearly the columns of a generator matrix $\Tilde{G}\in R^{\Tilde{k} \times n}$ must add up to $n$, so $\Tilde{G}$ is grouped into blocks of $k_0, k_1, \dots, k_{\nu-1}$ and $n- \sum_{i=0}^{\nu-1}k_i$ columns. Observe that $k_i=\mu_{i+1}-\mu_i$, where $\textbf{M}_\mu=(\mu_1,\dots,\mu_{\nu})$ is the $\nu$-shape of the $R$-module generated by $G$, after setting $\mu_0=0$. 
\end{definition}

In~\cite{napp2018column} a standard form for $\gamma$-encoders was introduced in the following way.

\begin{definition}
A $\gamma$-encoder $G$ of a block code $\C$ over $R$ is in \textbf{$\gamma$-standard form} if 
\begin{align} \label{pfrommatrix}
\begin{small}G = 
    \begin{bmatrix}
    I_{k_0} & A_{0,1}^0 & A_{0,2}^0 & A_{0,3}^0 & \dots & A_{0,\nu-1}^0 & A_{0,\nu}^0 \\ 
    ---- & ---- & ---- & ---- & ---- & ---- & ---- \\
    \gamma I_{k_0} & 0 & \gamma A_{1,2}^0 & \gamma A_{1,3}^0 & \dots & \gamma A_{1,\nu-1}^0 & \gamma A_{1,\nu}^0 \\
    0 & \gamma I_{k_1} & \gamma A_{1,2}^1 & \gamma A_{1,3}^1 & \dots & \gamma A_{1,\nu-1}^1 & \gamma A_{1,\nu}^1 \\ 
    ---- & ---- & ---- & ---- & ---- & ---- & ---- \\
    \gamma^2I_{k_0} & 0 & 0 & \gamma^2A_{2,3}^0 & \dots & \gamma^2A_{2,\nu-1}^0 & \gamma^2A_{2,\nu}^0 \\
    0 & \gamma^2I_{k_1} & 0 & \gamma^2A_{2,3}^1 & \dots & \gamma^2A_{2,\nu-1}^1 & \gamma^2A_{2,\nu}^1 \\
    0 & 0 & \gamma^2I_{k_2} & \gamma^2A_{2,3}^2 & \dots & \gamma^2A_{2,\nu-1}^2 & \gamma^2A_{2,\nu}^2 \\
    ---- & ---- & ---- & ---- & ---- & ---- & ---- \\
    \vdots & \vdots & \vdots & \vdots & \hdots & \vdots & \vdots \\ 
    ---- & ---- & ---- & ---- & ---- & ---- & ---- \\
    \gamma^{\nu-1}I_{k_0} & 0 & 0 & 0 & \dots & 0 & \gamma^{\nu-1}A_{\nu-1,\nu}^0 \\
    0 & \gamma^{\nu-1}I_{k_1} & 0 & 0 & \dots & 0 & \gamma^{\nu-1}A_{\nu-1,\nu}^1 \\
    0 & 0 & \gamma^{\nu-1}I_{k_2} & 0 & \dots & 0 & \gamma^{\nu-1}A_{\nu-1,\nu}^2 \\
    0 & 0 & 0 & \gamma^{\nu-1}I_{k_3} & \dots & 0 & \gamma^{\nu-1}A_{\nu-1,\nu}^3 \\
    \vdots & \vdots & \vdots & \vdots & \ddots & \vdots & \vdots \\
    0 & 0 & 0 & 0 & \hdots & \gamma^{\nu-1}I_{k_{\nu-1}} & \gamma^{\nu-1}A_{\nu-1,\nu}^{\nu-1}
    \end{bmatrix}.
    \end{small}
\end{align}
\end{definition}

If $\mathcal{C}$ has $\gamma$-dimension $k$ then clearly $k=\sum_{i=0}^{\nu-1}k_i(\nu-i) = \sum_{i=0}^{\nu-1}(\mu_{i+1}-\mu_i)(\nu-i) $ by counting the rows of the matrix (\ref{pfrommatrix}). 

\begin{definition}
Let $G$ be a $\gamma$-encoder of a linear block code $\mathcal{C}$ over $R$ as in (\ref{pfrommatrix}). The scalars $k_0,k_1,\dots, k_{\nu-1}$ are called the \textbf{parameters} of $\mathcal{C}.$
\end{definition}

It is not difficult to see that any nonzero block code $\mathcal{C}$ over $R$ has a generator matrix in standard form; see also 
\cite[Theorem 3.3]{hammingrings}. Furthermore, all generator matrices of $\mathcal{C}$ have the same parameters $k_0,k_1,\dots, k_{\nu-1}$. This follows immediately from Theorem \ref{thm:uniqueshape} on the uniqueness of the $\nu$-shape of a submodule of $R$.

\begin{remark}
A block code over $R$ is free if and only if its parameters are $k_0 = k$ and $k_i = 0$ for $i = 1, \dots , \nu-1$.
\end{remark}

\begin{definition}
Given an integer $\nu\geq 1$ and an integer $k\geq 0$, we call an ordered vector $(k_0,k_1,\dots,k_{\nu-1}) \in \mathbb{N}^r$ a \textbf{$\nu$-optimal set of parameters of $k$} if $$ k_0 + k_1 + \dots + k_{\nu-1} = \min\left\{\Tilde{k}_0 + \Tilde{k}_1 + \dots + \Tilde{k}_{\nu-1} \mid k = \sum_{i=0}^{\nu-1}\Tilde{k}_i(\nu-i)\right\}.$$
\end{definition}

\begin{remark} \label{uniqueroptimal}
When $\nu$ divides $k$ then $(k_0,0,\dots,0)$ with $k_0 = \frac{k}{\nu}$ is the unique $\nu$-optimal set of parameters of $k$. In general, the $\nu$-optimal set of parameters of $k$ does not have to be unique for a given $k$ and $\nu$. Consider for instance $k = 16$ and $\nu = 5$ then $(3,0,0,0,1)$ and $(0,4,0,0,0)$ are two $5$-optimal sets of parameters of $16$.
\end{remark}

Considering a $\nu$-optimal set of parameters of $k$ as the parameters of a linear block code over $R$ with $\gamma$-dimension $k$, in \cite{napp2017mds}, the following two results were provided for codes defined over $\Z_{p^r}$,  which leads to the following bound on the distance of a linear code over $R$. Here, we state them for codes defined over a finite chain ring $R$.

\begin{lemma} \label{r-optimal}
Let $(k_0,k_1,\dots,k_{\nu-1})$ be a $\nu$-optimal set of parameters of $k$. Then $$k_0 +k_1 + \dots + k_{\nu-1}= \left\lceil \frac{k}{\nu} \right \rceil.$$
\end{lemma}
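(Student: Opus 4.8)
The statement is a purely arithmetic fact about the quantity $k_0+\dots+k_{\nu-1}$ minimized over all decompositions $k=\sum_{i=0}^{\nu-1}\tilde k_i(\nu-i)$ with $\tilde k_i\in\mathbb{N}$. The plan is to prove the two inequalities $k_0+\dots+k_{\nu-1}\ge\lceil k/\nu\rceil$ and $k_0+\dots+k_{\nu-1}\le\lceil k/\nu\rceil$ separately; since $(k_0,\dots,k_{\nu-1})$ is $\nu$-optimal, the second one amounts to exhibiting one admissible choice of parameters whose sum equals $\lceil k/\nu\rceil$.

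For the lower bound, I would simply observe that each coefficient $\nu-i$ appearing in $k=\sum_{i=0}^{\nu-1}\tilde k_i(\nu-i)$ satisfies $\nu-i\le\nu$, hence
\[
k=\sum_{i=0}^{\nu-1}\tilde k_i(\nu-i)\le\nu\sum_{i=0}^{\nu-1}\tilde k_i,
\]
so $\sum_{i=0}^{\nu-1}\tilde k_i\ge k/\nu$, and since the left-hand side is an integer, $\sum_{i=0}^{\nu-1}\tilde k_i\ge\lceil k/\nu\rceil$. This holds for \emph{any} admissible set of parameters, in particular for the optimal one.

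For the upper bound, I would construct an explicit admissible decomposition with sum exactly $\lceil k/\nu\rceil$. Write $k=a\nu+b$ with $0\le b\le\nu-1$ via division with remainder, so $\lceil k/\nu\rceil=a$ if $b=0$ and $a+1$ if $b\ge1$. If $b=0$, take $k_0=a$ and $k_i=0$ for $i\ge1$, which gives $\sum k_i=a=\lceil k/\nu\rceil$ and $\sum k_i(\nu-i)=a\nu=k$. If $b\ge 1$, note that the coefficient $\nu-i$ for $i=\nu-b$ equals $b$; so take $k_0=a$, $k_{\nu-b}=1$, and all other $k_i=0$. Then $\sum_{i}k_i(\nu-i)=a\nu+b=k$ and $\sum_i k_i=a+1=\lceil k/\nu\rceil$. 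Hence the minimum defining $\nu$-optimality is at most $\lceil k/\nu\rceil$, and combined with the lower bound we conclude $k_0+\dots+k_{\nu-1}=\lceil k/\nu\rceil$.

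There is no real obstacle here; the only point requiring a little care is the case distinction on the remainder $b$ (in particular ensuring the index $\nu-b$ lies in the valid range $\{0,\dots,\nu-1\}$, which it does since $1\le b\le\nu-1$ forces $1\le\nu-b\le\nu-1$), and the ceiling-versus-floor bookkeeping when $b=0$ versus $b\ge1$.
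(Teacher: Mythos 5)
Your proof is correct. The paper itself does not prove Lemma \ref{r-optimal}; it imports the statement from the reference \cite{napp2017mds} (where it is proved for $\Z_{p^r}$) and merely restates it for a general finite chain ring, so there is no in-paper argument to compare against. Your two-inequality argument is the natural one: the lower bound $\sum_i \tilde k_i \ge \lceil k/\nu\rceil$ follows from $\nu-i\le\nu$ plus integrality, and your explicit optimal decomposition ($k_0=\lfloor k/\nu\rfloor$, $k_{\nu-b}=1$ for remainder $b\ge 1$) is exactly the choice the paper itself writes down later in \eqref{choosek_0}, so your proof is consistent with, and fills in, what the paper takes for granted.
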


Using the previous result on the set of parameters of $k$, the following Singleton-like bound for  the distance of linear block codes over $R$ has been derived in terms of the $\gamma$-dimension.

\begin{theorem}\label{singblock}
Let $\mathcal{C} \subseteq R^n$ be a linear code of $\gamma$-dimension $k$. Then
\begin{align} \label{singring}
    d(\mathcal{C}) \leq n- \left\lceil \frac{k}{\nu} \right \rceil + 1.
\end{align}
We call $\mathcal{C}$ an MDS code if the minimum distance attains this bound with equality.
\end{theorem}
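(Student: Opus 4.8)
The plan is to prove the Singleton-like bound in Theorem~\ref{singblock} by reducing it to the classical Singleton bound over the residue field $R/\m \cong \F_q$, using the standard-form structure of a $\gamma$-encoder together with Lemma~\ref{r-optimal}. First I would take a $\gamma$-encoder $G \in R^{k \times n}$ of $\mathcal C$ whose parameters $(k_0, k_1, \dots, k_{\nu-1})$ are a $\nu$-optimal set of parameters of $k$; such a $G$ exists because every block code over $R$ has a generator matrix in standard form and all generator matrices share the same parameters, which coincide with the $\nu$-optimal ones for a $\gamma$-dimension-$k$ code. Let $\tilde k := k_0 + k_1 + \dots + k_{\nu-1}$, so that by Lemma~\ref{r-optimal} we have $\tilde k = \lceil k/\nu \rceil$. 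Let $\tilde G \in R^{\tilde k \times n}$ be the associated generator matrix in standard form (the ``top layer'' of the $\gamma$-encoder), whose rows generate $\mathcal C$ as an $R$-module.

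The key step is the following distance comparison: if $d := d(\mathcal C)$, then for every codeword $c = u\tilde G$ with $u \in R^{\tilde k}$ nonzero, $\wt(c) \geq d$, and I want to locate a nonzero codeword of small support by a puncturing/pigeonhole argument. Concretely, suppose for contradiction that $d \geq n - \tilde k + 2$. Consider the reduction of $\tilde G$ modulo $\m$: the matrix $\bar{\tilde G} \in \F_q^{\tilde k \times n}$ obtained by the standard form is in reduced row-echelon-like shape with an identity block $I_{k_0}$ in the first $k_0$ columns (the lower layers of the standard form all have entries in $\m$, hence vanish mod $\m$, so only the $k_0$ rows of the top block survive the reduction). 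Rather than reducing mod $\m$ directly --- which loses rows --- the cleaner route is: delete any $\tilde k - 1$ of the columns of $\tilde G$ lying outside the pivot columns; since $\tilde G$ has an invertible $\tilde k \times \tilde k$ submatrix after a suitable permutation is \emph{not} guaranteed (the code need not be free), I instead argue directly with supports. Choose a set $S$ of $n - \tilde k + 1$ coordinates; I claim there is a nonzero codeword supported on $S$. Indeed, restricting the $R$-module map $R^{\tilde k} \to R^n$ given by $\tilde G$ to the complementary $\tilde k - 1$ coordinates yields an $R$-module homomorphism from a module that cannot inject into $R^{\tilde k - 1}$ in a way that kills the whole code --- here one uses that $\mathcal C$, having $\gamma$-dimension $k$, cannot be generated by fewer than $\tilde k$ elements (this is exactly the content of $\tilde k$ being the minimal size of a generating set, equivalently $\lceil k/\nu\rceil$ via Lemma~\ref{r-optimal} and the uniqueness of the $\nu$-shape from Theorem~\ref{thm:uniqueshape}). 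Working modulo $\m$: the image $\bar{\mathcal C} = \mathcal C/\m\mathcal C \subseteq \F_q^n$ has $\F_q$-dimension equal to the number of rows in any \emph{minimal} generating set, which is $\tilde k$. Since $\tilde k = \lceil k/\nu \rceil$, the classical Singleton bound gives $d(\bar{\mathcal C}) \leq n - \tilde k + 1 = n - \lceil k/\nu\rceil + 1$, and because reduction mod $\m$ cannot increase the weight of a vector, $d(\mathcal C) \leq d(\bar{\mathcal C})$ — wait, this inequality goes the wrong way, so I would instead lift: take a minimum-weight nonzero codeword $\bar c \in \bar{\mathcal C}$ achieving $\wt(\bar c) \leq n - \tilde k + 1$ and lift it to a codeword $c \in \mathcal C$ with $\csup(c) \subseteq \csup(\bar c) \cup (\text{new nonzero positions})$? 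That still does not immediately control $\wt(c)$.

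So the genuinely safe argument is the direct linear-algebra puncturing over $R$. Pick any $n - \lceil k/\nu \rceil + 1 = n - \tilde k + 1$ coordinates; call the complementary set $T$ with $|T| = \tilde k - 1$. The restriction map $\pi_T : \mathcal C \to R^{T}$ is an $R$-module homomorphism, and its image is generated by $\tilde k$ elements sitting inside $R^{\tilde k - 1}$. By Theorem~\ref{thm:uniqueshape} the image has a $\nu$-shape with total length at most the $\gamma$-dimension of $R^{\tilde k - 1}$'s generators, in any case strictly smaller than the $\gamma$-dimension of $\mathcal C$ — more precisely, $\pi_T(\mathcal C)$ can be generated by $\tilde k - 1$ elements, so its minimal number of generators is $\leq \tilde k - 1 < \tilde k$, while $\mathcal C$ needs $\tilde k$ generators. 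Hence $\ker \pi_T \neq 0$: there is a nonzero $c \in \mathcal C$ vanishing on $T$, i.e. $\csup(c) \subseteq [n]\setminus T$, so $\wt(c) \leq n - (\tilde k - 1) = n - \lceil k/\nu\rceil + 1$. This forces $d(\mathcal C) \leq n - \lceil k/\nu \rceil + 1$, which is the claim. The main obstacle is the justification that $\ker \pi_T \neq 0$ whenever the target has fewer ``generator slots'' than $\mathcal C$'s minimal generating set size; I would make this rigorous by passing to the residue field, where ``minimal number of generators of an $R$-module $M$'' equals $\dim_{\F_q}(M/\m M)$ (Nakayama), noting $\dim_{\F_q}(\mathcal C/\m\mathcal C) = \tilde k = \lceil k/\nu\rceil$ by Lemma~\ref{r-optimal} and the standard-form description, and $\dim_{\F_q}(\pi_T(\mathcal C)/\m\,\pi_T(\mathcal C)) \leq \tilde k - 1$, so $\pi_T$ is not injective.
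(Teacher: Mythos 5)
The paper does not actually prove this theorem: it is quoted from the $\Z_{p^r}$ literature (Napp--Pinto--Toste) and merely restated for chain rings, so there is nothing to compare line by line. Judged on its own, your final argument --- puncture on a set $T$ of $\tilde k-1$ coordinates, observe that $\pi_T(\mathcal C)\subseteq R^{\tilde k-1}$ needs at most $\tilde k-1$ generators while $\mathcal C$ needs $\tilde k$ (Nakayama: minimal generator number $=\dim_{\F_q}(M/\m M)$), conclude $\ker\pi_T\neq 0$ and hence a nonzero codeword of weight at most $n-\tilde k+1$ --- is sound and gives a clean, self-contained proof.

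There is, however, one genuinely false intermediate claim: that the parameters $(k_0,\dots,k_{\nu-1})$ of a code of $\gamma$-dimension $k$ ``coincide with the $\nu$-optimal ones,'' so that $\tilde k=\lceil k/\nu\rceil$ by Lemma~\ref{r-optimal}. The parameters are determined by the $\nu$-shape of $\mathcal C$ (Theorem~\ref{thm:uniqueshape}) and need not minimize $\sum_i k_i$ among all solutions of $k=\sum_i k_i(\nu-i)$. For instance, with $\nu=2$ the code $\mathcal C=\langle \gamma e_1,\gamma e_2\rangle\subseteq R^n$ has $\gamma$-dimension $k=2$ and parameters $(k_0,k_1)=(0,2)$, so $\tilde k=2$, whereas $\lceil k/\nu\rceil=1$. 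Your proof survives because the puncturing argument only requires $\tilde k\geq\lceil k/\nu\rceil$, which does hold for every code: $k=\sum_i k_i(\nu-i)\leq \nu\sum_i k_i=\nu\tilde k$, so $\tilde k\geq k/\nu$ and hence $\tilde k\geq\lceil k/\nu\rceil$ by integrality. You should replace the equality by this inequality and note that it yields $n-\tilde k+1\leq n-\lceil k/\nu\rceil+1$; Lemma~\ref{r-optimal} then only serves to explain why $\lceil k/\nu\rceil$ is the right quantity to appear in the bound (it is the smallest achievable $\tilde k$), not to compute $\tilde k$ for a given code. I would also strike the exploratory false starts (the mod-$\m$ distance comparison you correctly abandon) from the final write-up.
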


\subsection{Convolutional codes over finite chain rings}
It is common to consider convolutional codes over a ring only as $\Z_{p^r}[z]$-modules of $\Z_{p^r}[z]^n$. In the following we generalize some known results for the ring  of integers modulo $p^r$ to the more general setting of finite chain rings. Almost all the proofs are easy to generalize and for the sake of completeness we only propose a few of them which are significant for the rest of the paper or not completely straightforward.

We say that a polynomial vector $v(z)$ in $R^n[z]$, written as $v(z) = \sum_{i=0}^l v_i z^i$, with $v_i \in R^n$ and $v_l \neq 0$, has \textbf{degree} $l$, denoted by $\deg v(z) = l$. We call $v_l$ the \textbf{leading coefficient vector} of $v(z)$ and we denote it by $v^{lc}$. For  $G(z) \in R[z]^{k \times n}$ we denote by $G_{\infty} \in R^{k \times n}$ the matrix whose rows are constituted by the leading coefficients of the rows of $G(z)$.

\begin{definition}
A $\gamma$-basis $(v_1(z), \dots, v_k(z))$ is called a \textbf{reduced $\gamma$-basis} if the vectors $v_1^{lc}, \dots , v_k^{lc}$ are $\gamma$-linearly independent in $R^n$.
\end{definition}

In \cite[Theorem 3.12]{kuijper2007predictable} when $R=\Z_{p^r}$ it is shown that every submodule of $R^n[z]$ has a reduced $\gamma$-basis. The proof for a finite commutative chain ring works in the same way as its original form.

\begin{definition}
A convolutional code $\mathcal{C}$ of length $n$ over $R$ is an $R[z]$-submodule of $R^n[z]$. A \textbf{generator matrix} $\Tilde{G}(z) \in R^{\Tilde{k}\times n}[z]$ of $\mathcal{C}$ is a polynomial matrix whose rows form a minimal set of generators of $\mathcal{C}$ over $R[z]$ and therefore
\begin{align*}
    \mathcal{C} = \lbrace u(z)\Tilde{G}(z) \in R^n[z] \mid u(z) \in R^{\Tilde{k}}[z] \rbrace.
\end{align*}
If $\Tilde{G}(z)$ has linearly independent rows, i.e. if the map induced by $\Tilde{G}(z)$ is injective, then it is called an \textbf{encoder} of $\mathcal{C}$ and $\mathcal{C}$ is a free code. 
\end{definition}

Since not every convolutional code over $R$ admits an encoder due to the fact that a submodule of $R^n[z]$ is not necessarily free, we give the definition of the $\gamma$-encoder as we did for linear block codes.

\begin{definition}\label{def:codeasRmmodule}
A \textbf{$\gamma$-encoder} $G(z) \in R[z]^{k\times n}$ of a convolutional code $\mathcal{C}$ is a polynomial matrix whose rows form a $\gamma$-basis of $\mathcal{C}$. Then
\begin{align*}
    \mathcal{C} = \lbrace u(z)G(z) \in R^n[z] \mid u(z) \in \mathcal{T}[z]^k \rbrace.
\end{align*}
\end{definition}

If the rows of $G(z)$ form a reduced $\gamma$-basis then we say that $G(z)$ is in \textbf{reduced form} and we call $G(z)$ a \textbf{minimal} $\gamma$-encoder. The row degrees of any minimal $\gamma$-encoder are invariants of the code $\mathcal{C}$, i.e. they are the same for every minimal $\gamma$-encoder of $\C$; see \cite{kuijper2007predictable} for the proof of this result for modules over $\Z_{p^r}$. The sum of these row degrees is the \textbf{$\gamma$-degree} of $\mathcal{C}$. 

We denote an $R[z]$-submodule $\mathcal{C}$ of $R^n[z]$ with $\gamma$-dimension $k$ and $\gamma$-degree $\delta$ an $(n,k,\delta)$-convolutional code.

\begin{definition}
Let $v(z) = \sum_{i=0}^tv_iz^i \in R^n[z]$ with $v_i \in R^n$. The \textbf{weight} of $v(z)$ is defined as $$\operatorname{wt}(v(z)) = \sum_{i=0}^{t}\operatorname{wt_H}(v_i),$$
where $\operatorname{wt_H}(v_i)$ denotes the usual Hamming weight of $v_i$, i.e., the number of nonzero entries in $v_i$.
The \textbf{free distance} of a convolutional code $\mathcal{C}$ is defined as $$ \mathrm{d}_{\operatorname{free}}(\mathcal{C}) = \min\{ \operatorname{wt}(v(z)) \mid v(z) \in \mathcal{C}, \; v(z) \neq 0 \}.$$
\end{definition}
 
Observe that, since $R[z]^n\cong R^n[z]$, if $G(z)\in R[z]^{k\times n}$, then it can also be seen as $G(z)=\sum_{i=0}^m G_iz^i \in R^{k\times n}[z]$, which sometimes is a more convenient form for a $\gamma$-encoder of a convolutional code.

\begin{definition}
Let $\C$ be an $(n,k,\delta)$ convolutional code defined over $R$ and $G(z)=\sum_{i=0}^m G_iz^i\in R^{k\times n}[z]$ be a $\gamma$-encoder for $\C$. For any $j\in\N_0$, we define the \textbf{$j$-th truncated sliding generator matrix} as
\begin{align*}
& G_j^c :=\begin{bmatrix}
G_0 & G_1 & \cdots  & G_j \\
 & G_0 & \cdots & G_{j-1} \\
 & & \ddots & \vdots \\
 & &  & G_0\\
\end{bmatrix}\in R^{(j+1)k\times (j+1)n},\\
\end{align*}
where $G_j = 0,$ whenever $j>m$.
\end{definition}

The following Singleton-like bound for convolutional codes over finite chain rings can be proven in exactly the same way as the corresponding bound for convolutional codes over $\mathbb Z_{p^r}$ in \cite{mdsoverring}. Adapting its proof to finite chain rings is not completely straightforward, so for convenience of the reader we propose it in this setting.

\begin{theorem}\cite[Theorem 4.10]{mdsoverring}
The free distance of an $(n,k,\delta)$-convolutional code $\mathcal{C}$ over $R$ satisfies
\begin{align} \label{gensingring}
    \mathrm{d}_{\operatorname{free}}(\mathcal{C}) \leq n \left( \left\lfloor \frac{\delta}{k}\right\rfloor + 1 \right) - \left\lceil \frac{k}{\nu} \left( \left\lfloor \frac{\delta}{k} \right\rfloor  + 1 \right) - \frac{\delta}{\nu} \right\rceil + 1.
\end{align}
\end{theorem}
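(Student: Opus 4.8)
The strategy is to reduce the convolutional Singleton-like bound to the block-code Singleton-like bound of Theorem~\ref{singblock} applied to a suitable truncation of $\mathcal{C}$. First I would fix a minimal $\gamma$-encoder $G(z)=\sum_{i=0}^m G_iz^i$ of $\mathcal{C}$ and, for a parameter $j\in\N_0$ to be chosen later, consider the block code $\mathcal{C}_j^c\subseteq R^{(j+1)n}$ generated (over $\mathcal{T}$, in the $\gamma$-linear sense) by the rows of the $j$-th truncated sliding generator matrix $G_j^c\in R^{(j+1)k\times(j+1)n}$. The key observation is that any nonzero codeword of $\mathcal{C}_j^c$ corresponds to (the first $j+1$ coefficient blocks of) a nonzero codeword of $\mathcal{C}$, hence $d(\mathcal{C}_j^c)\ge \mathrm{d}_{\operatorname{free}}(\mathcal{C})$ — here one uses that a codeword $v(z)=u(z)G(z)$ with $\deg u(z)\le j$ is determined on its full support by its first $j+1$ blocks together with the fact that $G(z)$ is a $\gamma$-encoder, so truncation does not lose the nonzero codeword; more precisely, for the minimizing $v(z)$ one takes $j=\lfloor\delta/k\rfloor$ and checks that there is a codeword of degree at most $j$ realizing a weight $\le$ the right-hand side, by a standard argument on reduced $\gamma$-bases and row degrees.

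Next I would compute the $\gamma$-dimension of $\mathcal{C}_j^c$. Since the row degrees of the minimal $\gamma$-encoder sum to $\delta$ over a $\gamma$-basis of size $k$, a counting argument (as in \cite{mdsoverring} for $\Z_{p^r}$) shows that the $\gamma$-dimension of $\mathcal{C}_j^c$ equals $k(j+1)-\delta$ when $j=\lfloor\delta/k\rfloor$; the point is that the "overhang" rows of $G_j^c$ coming from coefficients $G_i$ with $i$ large relative to the row degree get truncated away, and the count of genuinely independent $\gamma$-generators drops by exactly $\delta$. Here one must be careful that the notion of $\gamma$-dimension is the right invariant (it is well-defined by \cite{minimalringencoders}) and that the truncated rows indeed form a $\gamma$-generator sequence of the stated $\gamma$-dimension; this uses the reduced form of $G(z)$ so that leading coefficients are $\gamma$-linearly independent.

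Then I would invoke Theorem~\ref{singblock}: for the block code $\mathcal{C}_j^c$ of length $(j+1)n$ and $\gamma$-dimension $k(j+1)-\delta$,
\begin{align*}
d(\mathcal{C}_j^c)\le (j+1)n-\left\lceil\frac{k(j+1)-\delta}{\nu}\right\rceil+1.
\end{align*}
Combining this with $\mathrm{d}_{\operatorname{free}}(\mathcal{C})\le d(\mathcal{C}_j^c)$ and substituting $j=\lfloor\delta/k\rfloor$ yields exactly the claimed bound \eqref{gensingring}, since $\left\lceil\frac{k}{\nu}(\lfloor\delta/k\rfloor+1)-\frac{\delta}{\nu}\right\rceil=\left\lceil\frac{k(\lfloor\delta/k\rfloor+1)-\delta}{\nu}\right\rceil$.

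The main obstacle I anticipate is the second step: precisely relating the $\gamma$-dimension of the truncated sliding matrix $\mathcal{C}_j^c$ to $k(j+1)-\delta$. Over a field this is a clean rank computation, but over a finite chain ring one must argue with $\gamma$-bases and shapes rather than ranks, so one needs the reduced-form hypothesis on $G(z)$ and the invariance of row degrees to control which rows of $G_j^c$ become $\gamma$-dependent after truncation. A related subtlety is ensuring that the minimizing nonzero codeword of $\mathcal{C}$ can be taken of degree at most $\lfloor\delta/k\rfloor$ without increasing its weight beyond the bound; this is where the argument mirrors the field case in \cite{mdsoverring} and should go through verbatim once the $\gamma$-linear algebra bookkeeping is set up. The remaining ceiling/floor manipulation is routine.
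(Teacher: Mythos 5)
Your overall strategy --- truncate to a block code, apply the Singleton-like bound of Theorem \ref{singblock}, control the $\gamma$-dimension of the truncation, and optimize at $j=\lfloor\delta/k\rfloor$ --- is the same as the paper's. However, you truncate to the wrong object, and this breaks both key steps. You take $\mathcal{C}_j^c$ to be the $\gamma$-span of the rows of $G_j^c$, i.e.\ the set of first $j+1$ coefficient blocks of codewords $u(z)G(z)$ with $\deg u(z)\le j$. A nonzero element of this set is a \emph{truncation} of a codeword, so its weight is at most (not at least) the weight of the full codeword; hence your claimed inequality $d(\mathcal{C}_j^c)\ge \mathrm{d}_{\operatorname{free}}(\mathcal{C})$ is false in general (already for $G(z)=[1+z]$ over a field and $j=0$ one gets $d(\mathcal{C}_0^c)=1<2=\mathrm{d}_{\operatorname{free}}$). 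Moreover, for a delay-free code the rows of $G_j^c$ are $\gamma$-linearly independent, so this object has $\gamma$-dimension $(j+1)k$ rather than $(j+1)k-\delta$: no ``overhang rows'' get truncated away from $G_j^c$, which only involves $G_0,\dots,G_j$.

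The object you actually need is $\mathcal{C}_j:=\{v(z)=u(z)G(z)\mid u(z)\in\T^k[z],\ \deg v(z)\le j\}$, the set of codewords of $\mathcal{C}$ of degree at most $j$, viewed inside $R[z]^n_{\le j}\cong R^{n(j+1)}$. Its nonzero elements are genuine codewords of $\mathcal{C}$, so $d(\mathcal{C}_j)\ge \mathrm{d}_{\operatorname{free}}(\mathcal{C})$ holds trivially, and the row-degree count you sketch (which in fact describes this object, not the row space of $G_j^c$) shows its $\gamma$-dimension is at least $(j+1)k-\delta>0$ for $j\ge\lfloor\delta/k\rfloor$; this is \cite[Corollary 4.8]{mdsoverring}, transported to chain rings, and it is the step where the reduced form of $G(z)$ and the invariance of the row degrees are used. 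With $\mathcal{C}_j$ in place of $\mathcal{C}_j^c$ your argument goes through and coincides with the paper's proof; the closing identity $\bigl\lceil \tfrac{k}{\nu}(\lfloor\delta/k\rfloor+1)-\tfrac{\delta}{\nu}\bigr\rceil=\bigl\lceil\tfrac{k(\lfloor\delta/k\rfloor+1)-\delta}{\nu}\bigr\rceil$ is indeed routine.
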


\begin{proof}
Let $G(z)\in R[z]^{n\times k}$ be a $\gamma$-encoder of the convolutional code and define $$\C_j:=\{v(z)=u(z)G(z)\ |\  u(z)\in\T^k[z], \deg(v(z))\leq j \}.$$
$\C_j$ is a submodule of the module of polynomial vectors with degree at most $j$, denoted by $R[z]^n_{\leq j}$, which is isomorphic to $R^{n(j+1)}$. Hence we can view $\C_j$ as a linear block code, whose $\gamma$-dimension we denote by $k_j^{\gamma}$. Using Theorem \ref{singblock}, one obtains
\begin{align*}
    \mathrm{d}_{\operatorname{free}}(\C)&\leq\min_{j\geq 0}\max_{k_j^{\gamma}}(d(\C_j))\leq \min_{j\geq 0}\max_{k_j^{\gamma}}\left(n(j+1)-\left\lceil\frac{k_j^{\gamma}}{\nu}\right\rceil+1\right)\\
    &=\min_{j\geq 0}\left(n(j+1)-\left\lceil\frac{\min(k_j^{\gamma})}{\nu}\right\rceil+1\right)\leq \min_{j\geq \lfloor\frac{\delta}{k}\rfloor}\left(n(j+1)-\left\lceil\frac{\min(k_j^{\gamma})}{\nu}\right\rceil+1\right).
\end{align*}
The reason why it is helpful to consider the last inequality is \cite[Corollary 4.8]{mdsoverring}, which can be directly translated to the more general setting of finite chain rings, and states that $$k_j^{\gamma}\leq\max((j+1)k-\delta),0)=(j+1)k-\delta\quad \text{for}\quad j\geq \left\lfloor\frac{\delta}{k}\right\rfloor.$$ It follows
$$\mathrm{d}_{\operatorname{free}}(\mathcal{C})\leq\min_{j\geq \lfloor\frac{\delta}{k}\rfloor}\left(n(j+1)-\left\lceil\frac{(j+1)k-\delta}{\nu}\right\rceil+1\right) = n \left( \left\lfloor \frac{\delta}{k}\right\rfloor + 1 \right) - \left\lceil \frac{k}{\nu} \left( \left\lfloor \frac{\delta}{k} \right\rfloor  + 1 \right) - \frac{\delta}{\nu} \right\rceil + 1.$$
\end{proof}

\begin{proposition} \cite[Proposition 14]{napp2018column} \label{pgenseq}
If $G(z) \in R[z]^{k\times n}$ is a $\gamma$-encoder of a convolutional code, then the rows of the $j$-th truncated sliding generator matrix form a $\gamma$-generator sequence, for any $j\in \mathbb{N}_0$.
\end{proposition}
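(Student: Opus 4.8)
The plan is to verify the two defining conditions of a $\gamma$-generator sequence directly for the rows of $G_j^c$, reducing everything to the known fact (Proposition applies to $G(z)$ itself, or rather the hypothesis that the rows of $G(z)$ form a $\gamma$-basis) that $\gamma G_0$-type relations among the rows of $G(z)$ lift to relations among the coefficient matrices. Write the rows of $G_j^c$ in blocks: for $s \in \{0,1,\dots,j\}$, the $s$-th block of rows is $(0 \mid \dots \mid 0 \mid G_0 \mid G_1 \mid \dots \mid G_{j-s})$, shifted $s$ positions to the right, and within that block the rows are indexed by $i \in \{1,\dots,k\}$ exactly as the rows of $G(z)$. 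So a row of $G_j^c$ is a pair $(s,i)$, and I will order them lexicographically by block first ($s=0$ before $s=1$, etc.) and within a block by $i$, so that later rows are "smaller" blocks never appearing — i.e. I order blocks by \emph{decreasing} $s$ is not what I want; I want the $\gamma$-span condition to point toward rows that come later, so I should order so that block $s=0$ comes first only if the shifts work out. Let me instead order the rows so that the last block ($s=j$) comes last; then I must check condition (i) for row $(s,i)$ with $i<k$ in terms of rows $(s,i+1),\dots,(s,k)$ and all rows of blocks $s+1,\dots,j$; and condition (ii) for the very last row $(j,k)$.

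The key computation: since the rows $g_1(z),\dots,g_k(z)$ of $G(z)=\sum_{\ell} G_\ell z^\ell$ form a $\gamma$-basis, for each $i<k$ we have $\gamma g_i(z) = \sum_{t=i+1}^{k} a_{i,t}(z)\, g_t(z)$ with $a_{i,t}(z)\in\mathcal{T}[z]$, and $\gamma g_k(z)=0$. Comparing coefficients of $z^\ell$ gives, for every $\ell$, a relation expressing $\gamma (G_\ell)_{i}$ (the $i$-th row of the coefficient matrix $G_\ell$) as a $\mathcal{T}$-combination of the rows $(G_{\ell'})_{t}$ for $t\ge i+1$ and various $\ell'\le \ell$ — precisely the entries that appear, with the same shift, further down in the block structure of $G_j^c$. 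I expect condition (ii) to be immediate: the last row of $G_j^c$ is $(0\mid\dots\mid 0\mid (G_0)_k)$, and $\gamma (G_0)_k = 0$ because $\gamma g_k(z)=0$ forces $\gamma G_0=0$ row-wise for its $k$-th row. For condition (i), multiplying row $(s,i)$ of $G_j^c$ by $\gamma$ produces the vector whose nonzero blocks are $\gamma(G_0)_i,\gamma(G_1)_i,\dots,\gamma(G_{j-s})_i$ in positions $s,s+1,\dots$; substituting the coefficient-wise relations above writes each $\gamma(G_\ell)_i$ as a $\mathcal{T}$-combination of rows $(G_{\ell'})_t$ with $t>i$ or ($t$ arbitrary but $\ell'<\ell$, hence a strictly smaller power appears — but careful: those come from lower blocks). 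The genuine subtlety is the bookkeeping of \emph{which} rows of $G_j^c$ these reassembled pieces correspond to: a term $a_{i,t}(z) g_t(z)$ contributes, across its coefficients, exactly the row $(s,t)$ of $G_j^c$ scaled by the appropriate $\mathcal{T}[z]$-coefficient shifted to start at block $s$; and the shift structure of $G_j^c$ is exactly compatible with this because the $z^\ell$-coefficient of $a_{i,t}(z)g_t(z)$ is a sum $\sum_{\ell'} (a_{i,t})_{\ell-\ell'}(G_{\ell'})_t$, which is what appears when one takes a $\mathcal{T}$-linear combination $\sum_{\ell'}(a_{i,t})_{\ell'}\cdot(\text{row }(s+\ell',t))$ — the diagonal band structure makes the indices line up.

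The main obstacle, then, is not any deep algebra but getting the index/shift accounting exactly right: showing that the $\gamma$-linear combination of rows of $G_j^c$ witnessing $\gamma\cdot(\text{row }(s,i))$ uses only rows that are strictly later in the chosen ordering (rows $(s,t)$ with $t>i$, and rows in blocks $s'>s$ never arise here since the shift cannot decrease $s$ — good, so only $(s,t)$, $t>i$, occur, which is even cleaner than I feared). One must also note that $\mathcal{T}[z]$-combinations shifted by the block offset $s$ remain $\mathcal{T}[z]$-combinations, so the witnessing combination is genuinely a $\gamma$-linear combination in the sense of the definition. I would conclude by remarking that this is essentially the content of \cite[Proposition 14]{napp2018column}, and the proof over a general finite commutative chain ring $R$ is word-for-word the same, the only ring-theoretic input being the existence and $\mathcal{T}[z]$-coefficient form of the $\gamma$-linear relations among the rows of a $\gamma$-basis, which holds verbatim over $R$.
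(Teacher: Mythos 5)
The paper offers no proof of this proposition at all: it is imported verbatim from \cite[Proposition 14]{napp2018column}, so there is nothing of the authors' own to compare against. Your direct verification is the right argument and it does go through. Identifying row $(s,i)$ of $G_j^c$ with the degree-$\le j$ truncation of $z^s g_i(z)$, the relations $\gamma g_i(z)=\sum_{t>i}a_{i,t}(z)g_t(z)$ with $a_{i,t}(z)\in\T[z]$ and $\gamma g_k(z)=0$ (which hold because the rows of $G(z)$ form a $\gamma$-basis, hence a $\gamma$-generator sequence) truncate to
$\gamma\cdot\mathrm{row}(s,i)=\sum_{t>i}\sum_{\ell'=0}^{j-s}(a_{i,t})_{\ell'}\cdot\mathrm{row}(s+\ell',t)$,
and the terms with $s+\ell'>j$ vanish under truncation, which is exactly the compatibility of the band structure you describe. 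One correction to your bookkeeping: you assert that ``rows in blocks $s'>s$ never arise \dots so only $(s,t)$, $t>i$, occur,'' which contradicts your own formula --- whenever some $a_{i,t}(z)$ has positive degree, rows $(s+\ell',t)$ with $\ell'>0$ genuinely appear. What is true (and what your justification ``the shift cannot decrease'' actually shows) is that no row with block index $s'<s$ occurs; since in the lexicographic order both $(s,t)$ with $t>i$ and $(s',t)$ with $s'>s$ come strictly after $(s,i)$, condition (i) of the definition is still met, so the slip is harmless. Condition (ii) for the final row $(j,k)$ follows from $\gamma(G_0)_k=0$ as you say, and the rows $(s,k)$ with $s<j$ are handled by the trivial combination since $\gamma g_k(z)=0$. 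The only ring-theoretic input is, as you note, that $\gamma$-bases over a general finite commutative chain ring satisfy the same $\T[z]$-coefficient relations as over $\Z_{p^r}$, so the argument transfers verbatim.
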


Observe that even if the rows of $G_j^c$ form a $\gamma$-generator sequence for a convolutional code over $R$, they do not always form a $\gamma$-basis, since they are not necessarily $\gamma$-linearly independent. 

The following is an example of a $2$-encoder of a convolutional code over $\mathbb{Z}_4$, whose first truncated sliding generator matrix does not have $\gamma$-linearly independent rows.

\begin{example}
Consider $\gamma=2$ and the $2$-encoder 
\begin{align*}
    G(z) = \begin{bmatrix}
    1 + z & 1 + z & 1 + z \\
    2 + 2z & 2 + 2z & 2 + 2z \\
    0 & 0 & 2z^2
    \end{bmatrix} \in \mathbb{Z}_4[z]^{3 \times 3}.
\end{align*}
Then the rows of 
\begin{align*}
    G_1^c = \begin{bmatrix}
    G_0 & G_1\\
    & G_0
    \end{bmatrix}
    = \begin{bmatrix}
    1 & 1 & 1 & 1 & 1 & 1 \\
    2 & 2 & 2 & 2 & 2 & 2 \\
    0 & 0 & 0 & 0 & 0 & 0 \\
    &  &  & 1 & 1 & 1 \\
    &  &  & 2 & 2 & 2 \\
    & & & 0 & 0 & 0
    \end{bmatrix} \in \mathbb{Z}_4^{6 \times 6}
\end{align*}
are not $2$-linearly independent.
\end{example}

We define a subclass of convolutional codes over $R$ with the property that  the rows of the $j$-th truncated sliding generator matrix form a $\gamma$-basis.

\begin{definition}
A $\gamma$-encoder $G(z)$ of a convolutional code $\mathcal{C} \subseteq R^n[z]$ is said to be \textbf{delay-free} if $G(0) = G_0$ has $\gamma$-linearly independent rows.
\end{definition}

It was shown in \cite[Lemma 3.31]{toste2016distance} (for $\mathbb Z_{p^r}$ but the generalization is straightforward)
that if an $(n,k,\delta)$-convolutional code $\mathcal{C}$ admits a delay-free $\gamma$-encoder, then all the $\gamma$-encoders of $\mathcal{C}$ are delay-free. Hence, for convenience, from now on we call a convolutional code that admits a delay-free $\gamma$-encoder a \textbf{delay-free convolutional code}.

\begin{remark}\label{rem:delayfree-free}
There exist convolutional codes that are free but not delay-free and convolutional codes that are delay-free but not free. We will give examples for both ways in the following.\\
Firstly, $\tilde{G}(z)=[ z\ z]\in\mathbb Z_{p^r}[z]^2$ is the generator matrix of a free code that is not delay free as $$G(z)=\left[\begin{array}{cc}
z     &  z\\
pz     & pz\\
\vdots & \vdots\\
p^{r-1}z & p^{r-1}z
\end{array}\right]$$
is a corresponding $p$-encoder with $G(0)=0$.

Secondly, take $A(z)\in\mathcal A_p[z]^{k\times n}$ with linearly independent rows and such that also $A(0)$ is full rank. Define $\tilde{G}(z)=p^{r-1}A(z)$. Then obviouly, the code with generator matrix $\tilde{G}(z)$ is not free as $p\cdot \tilde{G}(z)=0$, i.e. the rows of $\tilde{G}(z)$ are not linearly independent over $\mathbb Z_{p^r}[z]$. However, the rows of $\tilde{G}(z)$ are $p$-linearly independent (as the rows of $A(z)$ are linearly independent) and hence $G(z)=\tilde{G}(z)$ is a $p$-encoder and $G(0)=p^{r-1}A(0)$ has $p$-linearly independent rows as $A(0)$ is full rank. Hence the code is delay-free.
\end{remark}

\section{MDP convolutional codes over finite chain rings}\label{sec:MDP}

In this section we focus on the class of MDP convolutional codes over finite chain rings. After generalizing some results from $\Z_{p^r}$ to $R$, we can prove the main theorem of the paper, which is a characterization of MDP convolutional codes over finite chain rings.

If $G(z)$ is a delay-free $\gamma$-encoder, then the rows of $G_j^c$ are $\gamma$-linearly independent for any $j \in \mathbb{N}$, because the rows of $G_0$ are $\gamma$-linearly independent.

\begin{definition}
The \textbf{$j$-th column distance} of a convolutional code $\mathcal{C}$ is defined for any $j>0$ as 
$$d_j^c(\C) := \min \{\operatorname{wt_H}(v_0+v_1z+\cdots+v_jz^j) \mid v(z) = \sum_iv_iz^i \in \mathcal{C}, \, v_0 \neq 0 \}.$$
\end{definition}

It clearly holds that $d_0^c \leq d_1^c \leq d_2^c \dots \text{ and } \lim\limits_{j \rightarrow \infty}{d_j^c} = \mathrm{d}_{\mathrm{free}}(\mathcal{C})$.

For a delay-free $(n,k,\delta)$-convolutional code $\mathcal{C}$, the $j$-th column distances can be defined using the $j$-th truncated sliding generator matrix $G_j^c$ of $\mathcal{C}$ as
\begin{align} \label{cd}
    d_j^c = \min \{\operatorname{wt_H}((u_0 \; \dots \; u_j)G_j^c) \mid u_i \in \mathcal{T}^k, \, u_0 \neq 0 \},
\end{align}
which is usually a more convenient definition to use. For delay-free convolutional codes the $j$-th column distance defined as in (\ref{cd}) is well-defined, i.e. the $j$-th column distance does not depend on the choice of the generator matrix $G(z)$.
If we have a delay-free convolutional code $\mathcal{C}$ we can write $G_0$ in $\gamma$-standard form as in (\ref{pfrommatrix}) where $k_0,k_1, \dots, k_{\nu-1}$ are the fixed parameters of the linear block code with $\gamma$-encoder $G_0$, derived from the shape of the submodule of $R^n$ generated by $G_0$.

\begin{theorem} \cite[Theorem 3.35]{toste2016distance} \label{columndistancefirstbound}
Let $\mathcal{C}$ be a delay-free $(n,k,\delta)$-convolutional code in $R^n[z]$ with $\gamma$-encoder $G(z)$ such that the block code with $\gamma$-encoder $G_0$ has parameters $k_0,k_1, \dots, k_{\nu-1}$. Then 
\begin{align*}
    d_j^c \leq \begin{cases}
    (j+1)\left( n- \sum_{i=0}^{\nu-j}k_i \right) - \sum_{s=2}^j sk_{\nu-(s-1)} +1   & \text{ for } j \leq \nu, \\
    (j+1)n - \sum_{i=0}^{\nu-1}k_i -k - (j-\nu)k_0 +1  &\text{ for } j > \nu.
    \end{cases}  
\end{align*}
\end{theorem}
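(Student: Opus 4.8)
The plan is to bound $d_j^c$ by exhibiting, for each $j$, a particular codeword whose initial coefficient $v_0$ is nonzero and which has small Hamming weight, and then to realize that this is exactly a Singleton-type count for a suitable truncation of the code. Concretely, I would first observe that, since $\mathcal{C}$ is delay-free, we may fix a $\gamma$-encoder $G(z)$ with $G_0$ in $\gamma$-standard form as in~\eqref{pfrommatrix}; by the invariance of the parameters $k_0,\dots,k_{\nu-1}$ this costs nothing. Then I would consider the block code $\C_j^{(0)}$ consisting of all $v(z)=u(z)G(z)$ with $u(z)\in\T[z]^k$, $\deg v(z)\le j$ \emph{and} $v_0\ne 0$, or more precisely the linear block code generated over $R$ by the rows of $G_j^c$ whose "first block" is nonzero. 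The key point is that the minimum Hamming weight of such a codeword is $d_j^c$, by the definition~\eqref{cd}.

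The heart of the argument is a counting/shape estimate. Using Proposition~\ref{pgenseq}, the rows of $G_j^c$ form a $\gamma$-generator sequence, and since $G(z)$ is delay-free these rows are $\gamma$-linearly independent, so they form a $\gamma$-basis; hence the $\gamma$-dimension of $\C_j := \{u(z)G(z) : u(z)\in\T[z]^k,\ \deg v(z)\le j\}$ can be read off directly from the sliding structure. The subtlety, and what produces the two cases $j\le\nu$ versus $j>\nu$, is that each diagonal block $G_0$ contributes a copy of the block code with parameters $k_0,\dots,k_{\nu-1}$, but the $\gamma$-standard form~\eqref{pfrommatrix} of $G_0$ has $\sum_{i=0}^{\nu-1}k_i(\nu-i)$ rows, and when we stack $j+1$ shifted copies the "$\gamma^s$-scaled" rows interact with the shifted copies once $j$ exceeds $\nu$. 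I would carefully count, for fixed first-block-nonzero, the maximum possible $\gamma$-dimension $k_j^\gamma$ of $\C_j$ with the constraint $v_0\ne 0$, then apply the Singleton-like bound of Theorem~\ref{singblock}: $d_j^c \le (j+1)n - \lceil k_j^\gamma/\nu\rceil + 1$. Plugging in the two expressions for $k_j^\gamma$ (one valid for $j\le\nu$, one for $j>\nu$) and simplifying the ceiling using that the relevant quantities are already integers — because the $\gamma^i I_{k_i}$ blocks are "full" in the appropriate sense — should yield exactly the two claimed formulas $(j+1)(n-\sum_{i=0}^{\nu-j}k_i) - \sum_{s=2}^j sk_{\nu-(s-1)}+1$ and $(j+1)n - \sum_{i=0}^{\nu-1}k_i - k - (j-\nu)k_0 + 1$.

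The main obstacle I anticipate is the exact bookkeeping of the shape of the truncated code $\C_j$ under the "first block nonzero" constraint: one must track how many rows of weight-scaled type $\gamma^s$ survive, how the shifts of $G_0$ overlap once $j\ge\nu$ (since $\gamma^\nu=0$ kills the deepest rows), and how $\sum_{i=0}^{\nu-1}k_i(\nu-i)=k$ enters. This is the place where the argument genuinely differs from the free/field case and where the two-case split originates. A clean way to organize it is to induct on $j$, or alternatively to compute $k_j^\gamma$ as $k_{j-1}^\gamma$ plus the contribution of the newly added block column, being careful that for $j\le\nu$ only the top $\nu-j$ parameter-groups of the new $G_0$-copy are "free" while the rest are forced by the generator-sequence relations, whereas for $j>\nu$ each new copy contributes exactly $k_0$ (the free part) beyond the stabilized count. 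Once $k_j^\gamma$ is pinned down, the rest is the direct substitution into Theorem~\ref{singblock} together with the observation that the ceiling is attained by an integer, so no rounding loss occurs; I would double-check the $j=\nu$ boundary case to confirm both formulas agree there.
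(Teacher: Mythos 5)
There is a genuine gap, and it sits exactly where you place your ``main obstacle'': the reduction to Theorem~\ref{singblock} cannot produce the stated bound, so the plan is not merely incomplete in its bookkeeping but routed through a step that fails. Two problems. First, $d_j^c$ is a minimum over codewords with $v_0\neq 0$; that set is not closed under addition, so it is not a linear block code and Theorem~\ref{singblock} does not apply to it, while applying Theorem~\ref{singblock} to the full truncation $\C_j$ only bounds $d(\C_j)$, whose minimizing codeword may have $v_0=0$ and hence says nothing about $d_j^c$. Second, even ignoring this, the target formula is not of the Singleton shape $(j+1)n-\lceil k_j^\gamma/\nu\rceil+1$ for the $\gamma$-dimension $k_j^\gamma=(j+1)k$ of the truncation (which is what delay-freeness gives you): take $\nu=2$, $k_0=0$, $k_1=1$, $j=1$; then the theorem claims $d_1^c\le 2(n-1)+1=2n-1$, whereas Theorem~\ref{singblock} applied to $\C_1$ gives only $2n-\lceil 2/2\rceil+1=2n$. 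So the claimed bound is strictly stronger than anything obtainable by your proposed substitution, and no choice of ``effective $\gamma$-dimension'' fixes this, because the defect term $(j+1)\sum_{i=0}^{\nu-j}k_i+\sum_{s=2}^{j}sk_{\nu-(s-1)}$ weights the parameters non-uniformly by level.

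What is actually needed is the constructive argument that your plan defers: starting from the $\gamma$-standard form of $G_0$, one picks $u_0\in\T^k\setminus\{0\}$ supported on a chosen level $\gamma^\ell$ and then solves, column block by column block, for $u_1,\dots,u_j\in\T^k$ forcing zeros in a staircase pattern of positions; the count of zeros one can force from a row at depth $\gamma^{\nu-s+1}$ is capped after $s$ shifts because $\gamma^\nu=0$ annihilates that row's contribution, and this per-level survival count is precisely what generates the terms $sk_{\nu-(s-1)}$ and the case split at $j=\nu$ (at which, as you note, the two formulas do agree). Without carrying out that construction, the weighted defect term is not derived, only named.
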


From Theorem \ref{columndistancefirstbound} it follows that - in order for the bound on the column distances of the code to be maximal - we would like to choose the $\nu$-optimal set of parameters of $k$, such that the value of $k_0$ is as great as possible. Indeed, if we look at the two bounds on the $j$-th column distances in Theorem \ref{columndistancefirstbound}, we can easily see that out of all the parameters $k_0, k_1, \dots, k_{\nu-1}$, the parameter which least affects the bound is $k_0$. Therefore we choose $k_0, \dots,k_{\nu-1}$ to be
\begin{align} \label{choosek_0}
    k_0 = \left\lfloor \frac{k}{\nu} \right\rfloor, \; k_{\nu-N} = 1 \text{ and } k_i = 0,
\end{align}
where $N = k - \left\lfloor \frac{k}{\nu} \right\rfloor \nu$ and $i = 1, \dots, \nu-1, \; i \neq \nu-N$. Indeed, this forms a $\nu$-optimal set of parameters of $k$: 
\begin{align*}
\nu k_0 + (\nu-1)k_1 + \dots + k_{\nu-1} = \nu\left\lfloor \frac{k}{\nu} \right\rfloor + N = \nu\left\lfloor \frac{k}{\nu} \right\rfloor +  k - \left\lfloor \frac{k}{\nu} \right\rfloor \nu = k.
\end{align*}
With this $\nu$-optimal set of parameters we can maximize the bound in Theorem \ref{columndistancefirstbound}.

\begin{theorem} \cite[Corollary 3.36]{toste2016distance} \label{establishedcd}
The $j$-th column distance of a delay-free $(n,k,\delta)$-convolutional code in $R^n[z]$ satisfies
\begin{align}
    d_j^c \leq \begin{cases}
    \left( n- \left\lceil \frac{k}{\nu} \right\rceil \right) (j+1) +1  & \textnormal{ for } j \leq N \\
     \left( n- \left\lceil \frac{k}{\nu} \right\rceil \right) (j+1) - \left(\left\lceil \frac{k}{\nu} \right\rceil - \left\lfloor \frac{k}{\nu} \right\rfloor \right)(N+1)+1  & \textnormal{ for }j > N,
    \end{cases} 
\end{align}
where $N = k - \left\lfloor \frac{k}{\nu} \right\rfloor \nu$.
\end{theorem}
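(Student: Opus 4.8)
The plan is to read off the bound of Theorem \ref{columndistancefirstbound} for the specific parameter vector \eqref{choosek_0} and simplify. The preceding discussion does the conceptual work: among all parameter vectors $(k_0,\dots,k_{\nu-1})$ satisfying $\sum_{i=0}^{\nu-1}(\nu-i)k_i=k$, the choice \eqref{choosek_0} — namely $k_0=\lfloor k/\nu\rfloor$ as large as possible, with the residue $N=k-\lfloor k/\nu\rfloor\nu$ placed on $k_{\nu-N}$ and all other $k_i=0$ — makes the right-hand side of Theorem \ref{columndistancefirstbound} maximal. Since that theorem applies to any delay-free $(n,k,\delta)$-code with the $k_i$ taken to be the actual parameters of the block code generated by $G_0$, the value of the bound at \eqref{choosek_0} is then an upper bound for $d_j^c$ that depends only on $n$, $k$, $\nu$ and $j$, which is exactly what we want.

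First I would note that for the vector \eqref{choosek_0} only $k_0$ and $k_{\nu-N}$ are non-zero, so every sum occurring in Theorem \ref{columndistancefirstbound} has at most two non-zero summands. Tracking which ones survive: $\sum_{i=0}^{\nu-j}k_i$ contains $k_{\nu-N}$ exactly when $\nu-N\le\nu-j$, i.e.\ when $j\le N$, so it equals $\lceil k/\nu\rceil$ for $j\le N$ and $\lfloor k/\nu\rfloor$ for $j>N$; the sum $\sum_{s=2}^{j}sk_{\nu-(s-1)}$ is empty for $j\le N$ and, for $N<j\le\nu$, reduces to the single term $s=N+1$, contributing $N+1$; and $\sum_{i=0}^{\nu-1}k_i=\lceil k/\nu\rceil$. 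These observations reduce everything to bookkeeping.

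Then I would carry out the three-way case split on $j$. For $j\le N$ the first branch of Theorem \ref{columndistancefirstbound} collapses to $(j+1)\!\left(n-\lceil k/\nu\rceil\right)+1$, the first case of the statement. For $N<j\le\nu$ the first branch becomes $(j+1)\!\left(n-\lfloor k/\nu\rfloor\right)-(N+1)+1$, and substituting the relation between $\lfloor k/\nu\rfloor$ and $\lceil k/\nu\rceil$ (their difference is $\lceil k/\nu\rceil-\lfloor k/\nu\rfloor$, equal to $1$ precisely when $N\ge1$) rewrites this as the second case. For $j>\nu$ I would instead feed \eqref{choosek_0} into the second branch of Theorem \ref{columndistancefirstbound} and use $k=\nu\lfloor k/\nu\rfloor+N$; the resulting expression simplifies to the very same closed form obtained in the range $N<j\le\nu$, so the two ranges amalgamate into the single condition $j>N$. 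The boundary value $N=0$ (i.e.\ $\nu\mid k$) should be recorded separately: there $\lceil k/\nu\rceil=\lfloor k/\nu\rfloor$, the range $j\le N$ plays no role for $j\ge1$, and the correction term is $0$, so the formula reads $(j+1)(n-k/\nu)+1$, consistent with both branches.

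I expect the real work to sit in two places. The first is the claim, used implicitly, that \eqref{choosek_0} maximizes the bound of Theorem \ref{columndistancefirstbound} over all admissible $(k_0,\dots,k_{\nu-1})$: one has to verify that transferring one unit of ``weight'' from $k_0$ to some $k_i$ with $i\ge1$ (forced, to keep $\sum(\nu-i)k_i=k$) never increases the right-hand side, for every $j$ simultaneously — this is the precise content of the informal statement that $k_0$ is the parameter affecting the bound the least. The second is the verification that the $j>\nu$ branch of Theorem \ref{columndistancefirstbound}, after substitution and the identity $k=\nu\lfloor k/\nu\rfloor+N$, really lands on the same formula as the $N<j\le\nu$ branch, so that the two-piece bound of Theorem \ref{columndistancefirstbound} genuinely compresses to the two-piece bound claimed here; this is pure algebra, but it is the step where an error would most easily creep in.
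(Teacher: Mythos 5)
There is a genuine gap, and it sits exactly where you predicted an error ``would most easily creep in'': the case $j>N$ with $N\geq 1$ does not close. Your substitution of \eqref{choosek_0} into the first branch of Theorem \ref{columndistancefirstbound} is correct and gives, for $N<j\leq\nu$,
\[
(j+1)\Bigl(n-\Bigl\lfloor \tfrac{k}{\nu}\Bigr\rfloor\Bigr)-(N+1)+1,
\]
and the $j>\nu$ branch indeed amalgamates with this (both equal $(j+1)(n-\lfloor k/\nu\rfloor)-N$). But this expression is \emph{not} equal to the claimed bound $(j+1)(n-\lceil k/\nu\rceil)-(\lceil k/\nu\rceil-\lfloor k/\nu\rfloor)(N+1)+1$: writing $\lfloor k/\nu\rfloor=\lceil k/\nu\rceil-1$ (valid for $N\geq1$) leaves an uncancelled extra term $+(j+1)$, so what you obtain exceeds the asserted bound by $(j+1)(\lceil k/\nu\rceil-\lfloor k/\nu\rfloor)$. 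Concretely, for $\nu=2$, $k=3$, $j=2$ your route yields $3n-4$ while the theorem claims $3n-7$. So the statement does not follow from the paper's Theorem \ref{columndistancefirstbound} by substitution alone; either that intermediate bound must be taken in a sharper form (the discrepancy disappears if the first sum there is the full $\sum_{i=0}^{\nu-1}k_i=\lceil k/\nu\rceil$ rather than the truncated $\sum_{i=0}^{\nu-j}k_i$, which is presumably what the cited source actually proves), or a separate argument is needed for $j>N$. Note the paper itself offers no proof here — it cites \cite[Corollary 3.36]{toste2016distance} — so the burden of reconciling the two formulas cannot be discharged by appeal to the text.

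A secondary, but also real, issue is the maximization step you flag and defer. Theorem \ref{establishedcd} is stated for an arbitrary delay-free code, whose $G_0$ may have any admissible parameters $(k_0,\dots,k_{\nu-1})$, not necessarily \eqref{choosek_0}; so the claim that the right-hand side of Theorem \ref{columndistancefirstbound} is maximized at \eqref{choosek_0}, simultaneously for all $j$, is not an optimization you may skip — it is the content of the theorem for all codes whose parameters differ from \eqref{choosek_0}. Your plan correctly identifies this, but as written the proof is incomplete on both counts; everything you do verify is fine in the cases $N=0$ (i.e.\ $\nu\mid k$, the only case the paper subsequently uses) and $j\leq N$.
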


For a delay-free $(n,k,\delta)$-convolutional code $\mathcal{C}$ over $R$ all of the column distances are smaller or equal than the free distance. In particular, they must also be smaller or equal than the value established in the generalized Singleton bound for convolutional codes over $R$ in~\eqref{gensingring}. Let $L$ be the largest integer such that the bound on the $L$-th column distance is smaller or equal to the generalized Singleton bound.
It is possible to explicitly determine a formula for $L$ (see \cite{napp2018column} for the case $R=\Z_{p^r}$). Since for some cases these formulas are rather long and complicated and in this paper we only need an explicit formula for the case that $\nu\mid k$, we only give the expression for $L$ for this case.

\begin{lemma} \label{L}
Let $\mathcal{C}$ be an $(n,k,\delta)$-convolutional code over $R$ with $\nu\mid k$. Then 
\begin{align*}
    L = \left\lfloor \frac{\delta}{k} \right\rfloor + \left\lfloor \frac{\left\lfloor \frac{\delta}{\nu} \right\rfloor}{n -  \frac{k}{\nu}}\right\rfloor.
\end{align*}
\end{lemma}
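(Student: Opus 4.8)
The plan is to reduce the statement to an explicit numerical inequality between the upper bound on $d_L^c$ coming from Theorem \ref{establishedcd} and the generalized Singleton bound \eqref{gensingring}, both specialized to the case $\nu\mid k$, and then to solve for the largest admissible $L$.

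First I would exploit the hypothesis $\nu\mid k$: then $\lceil k/\nu\rceil=\lfloor k/\nu\rfloor=k/\nu$ and hence $N=k-\lfloor k/\nu\rfloor\nu=0$. Feeding this into Theorem \ref{establishedcd}, both branches collapse to the single bound $d_j^c\leq\bigl(n-\tfrac{k}{\nu}\bigr)(j+1)+1$, valid for every $j\geq 0$. Since $n>\tfrac{k}{\nu}$ (which we may assume, as otherwise the quantity $n-\tfrac{k}{\nu}$ in the statement is not positive and the formula is vacuous), this bound is strictly increasing in $j$, so the largest integer $L$ for which it does not exceed the generalized Singleton bound is well-defined.

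Next I would simplify the right-hand side of \eqref{gensingring}. Because $\tfrac{k}{\nu}\in\mathbb{Z}$, the argument of the ceiling is $\tfrac{k}{\nu}\bigl(\lfloor\delta/k\rfloor+1\bigr)-\tfrac{\delta}{\nu}$ with the first summand an integer, so using $\lceil a-b\rceil=a-\lfloor b\rfloor$ for $a\in\mathbb{Z}$ we get $\bigl\lceil\tfrac{k}{\nu}(\lfloor\delta/k\rfloor+1)-\tfrac{\delta}{\nu}\bigr\rceil=\tfrac{k}{\nu}(\lfloor\delta/k\rfloor+1)-\lfloor\delta/\nu\rfloor$. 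Substituting back, the generalized Singleton bound becomes $\bigl(n-\tfrac{k}{\nu}\bigr)\bigl(\lfloor\delta/k\rfloor+1\bigr)+\lfloor\delta/\nu\rfloor+1$.

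Finally, $L$ is by definition the largest integer satisfying $\bigl(n-\tfrac{k}{\nu}\bigr)(L+1)+1\leq\bigl(n-\tfrac{k}{\nu}\bigr)\bigl(\lfloor\delta/k\rfloor+1\bigr)+\lfloor\delta/\nu\rfloor+1$. Cancelling the constant term and dividing by the positive integer $n-\tfrac{k}{\nu}$ yields $L-\lfloor\delta/k\rfloor\leq\tfrac{\lfloor\delta/\nu\rfloor}{\,n-k/\nu\,}$, and since the left-hand side is an integer the maximal such $L$ obeys $L-\lfloor\delta/k\rfloor=\bigl\lfloor\tfrac{\lfloor\delta/\nu\rfloor}{\,n-k/\nu\,}\bigr\rfloor$, which is exactly the claimed identity. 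The only step requiring genuine care is the floor/ceiling bookkeeping in the two simplification steps above; everything else is a routine linear manipulation, so I expect no serious obstacle.
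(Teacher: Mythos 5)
Your proof is correct, and in fact the paper gives no proof of this lemma at all: it merely cites the $\Z_{p^r}$ literature and states the formula for the case $\nu\mid k$. Your derivation is the standard one that the definition of $L$ dictates: with $\nu\mid k$ we have $N=0$ so both branches of Theorem \ref{establishedcd} collapse to $\bigl(n-\tfrac{k}{\nu}\bigr)(j+1)+1$; the identity $\lceil a-b\rceil=a-\lfloor b\rfloor$ for integer $a$ turns the Singleton bound \eqref{gensingring} into $\bigl(n-\tfrac{k}{\nu}\bigr)\bigl(\lfloor\delta/k\rfloor+1\bigr)+\lfloor\delta/\nu\rfloor+1$; and maximizing the integer $L$ subject to the resulting linear inequality gives exactly the claimed expression. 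Your side remarks (monotonicity in $j$ so that the maximum exists, and the standing assumption $n>k/\nu$) are the right points of care, and the floor/ceiling bookkeeping checks out.
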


At this point we are finally able to provide the object of interest of this work.

\begin{definition}
A delay-free $(n,k,\delta)$-convolutional code $\mathcal{C}$ over $R$ is of \textbf{maximum distance profile} (MDP) if the $j$-th column distances attain the bounds established in Theorem \ref{establishedcd} for all $j \leq L$.
\end{definition}

\begin{lemma}\cite[Lemma 3.41]{toste2016distance}\label{ifjtheni}
Let $\mathcal{C}$ be a delay-free  $(n,k,\delta)$-convolutional code over $R$ with $\nu|k$, such that the parameters of the linear block code with $\gamma$-encoder $G_0$ are $k_0=\frac{k}{\nu}$ and $k_i=0$ for $i=1,\dots,\nu-1$. If
\begin{align*}
d_j^c = \left( n- \frac{k}{\nu} \right) (j+1) +1 \text{ for some } j \in \mathbb{N}, 
\end{align*}
then 
\begin{align*}
d_i^c = \left( n-\frac{k}{\nu} \right) (i+1) +1 \text{ for all } i < j.
\end{align*}
\end{lemma}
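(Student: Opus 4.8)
The plan is to argue by contradiction and to reduce the whole statement to a single induction step. Because $\nu\mid k$, the bound of Theorem~\ref{establishedcd} collapses to the single expression $d_i^c\le\big(n-\tfrac{k}{\nu}\big)(i+1)+1$ valid for every $i$ (the case $N=0$), so $\mathcal{C}$ \emph{fails} to meet the bound at level $i$ precisely when $d_i^c\le\big(n-\tfrac{k}{\nu}\big)(i+1)$. It therefore suffices to prove the one-step implication
\[
d_i^c\le\Big(n-\tfrac{k}{\nu}\Big)(i+1)\ \Longrightarrow\ d_{i+1}^c\le\Big(n-\tfrac{k}{\nu}\Big)(i+2):
\]
indeed, if $d_i^c$ missed the bound for some $i<j$, then applying this implication $j-i$ times would yield $d_j^c\le\big(n-\tfrac{k}{\nu}\big)(j+1)$, contradicting $d_j^c=\big(n-\tfrac{k}{\nu}\big)(j+1)+1$; hence $d_i^c$ meets the bound for every $i<j$.

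For the one-step implication I would argue as follows. Using formula~(\ref{cd}) for the column distances of a delay-free code, pick $u_0,\dots,u_i\in\T^k$ with $u_0\neq 0$ and $\operatorname{wt_H}\big((u_0\ \cdots\ u_i)G_i^c\big)\le\big(n-\tfrac{k}{\nu}\big)(i+1)$, and append one more block $w\in\T^k$ to form $u'=(u_0\ \cdots\ u_i\ w)$. Since $G_{i+1}^c$ is block upper-triangular Toeplitz and $G_i^c$ occupies its top-left $(i+1)k\times(i+1)n$ corner, the first $i+1$ output blocks of $u'G_{i+1}^c$ coincide with those of $(u_0\ \cdots\ u_i)G_i^c$ and are independent of $w$, while the $(i+1)$-st output block equals $r+wG_0$, where $r:=\sum_{t=0}^{i}u_tG_{i+1-t}\in R^n$ does not depend on $w$. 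Thus $\operatorname{wt_H}(u'G_{i+1}^c)\le\big(n-\tfrac{k}{\nu}\big)(i+1)+\operatorname{wt_H}(r+wG_0)$ and $u'_0=u_0\neq 0$, so the proof is finished once $w$ is chosen so that $\operatorname{wt_H}(r+wG_0)\le n-\tfrac{k}{\nu}$.

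This last choice is exactly where the hypothesis on the parameters of $G_0$ is used. Since $k_1=\dots=k_{\nu-1}=0$, the block code $\mathcal{C}_0=\{wG_0\mid w\in\T^k\}$ has shape $(\tfrac{k}{\nu},\dots,\tfrac{k}{\nu})$, so by Theorem~\ref{thm:uniqueshape} it is isomorphic to $R^{k/\nu}$, i.e.\ free of $R$-rank $\tfrac{k}{\nu}$; equivalently, after a permutation of the $n$ coordinates (which does not change Hamming weights) $\mathcal{C}_0$ has a generator matrix of the form $[\,I_{k/\nu}\mid A\,]$. Consequently every coset of $\mathcal{C}_0$ in $R^n$ contains a vector of Hamming weight at most $n-\tfrac{k}{\nu}$: match the first $\tfrac{k}{\nu}$ coordinates of a coset representative using the $I_{k/\nu}$ block. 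Applying this to the coset $r+\mathcal{C}_0$ (and using $-\mathcal{C}_0=\mathcal{C}_0$) yields the required $w\in\T^k$, whence $\operatorname{wt_H}(u'G_{i+1}^c)\le\big(n-\tfrac{k}{\nu}\big)(i+2)$ with nonzero leading block, i.e.\ $d_{i+1}^c\le\big(n-\tfrac{k}{\nu}\big)(i+2)$. This completes the induction.

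I expect the only genuinely non-routine step to be this last one: reading off from the parameter hypothesis that $\mathcal{C}_0$ is free of $R$-rank $\tfrac{k}{\nu}$ and deducing the covering-radius bound $n-\tfrac{k}{\nu}$ for it. The block-Toeplitz corner observation, the reduction to a single inductive step, and the legitimacy of computing the column distances of a delay-free code via~(\ref{cd}) are all routine. It is also worth keeping in mind that in the argument the appended block $w$ may be an arbitrary element of $\T^k$ while the leading block $u_0$ must stay nonzero; this asymmetry is precisely why one appends a block to the codeword rather than shifting it by $z$.
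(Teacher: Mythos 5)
The paper does not prove this lemma; it is quoted verbatim from \cite[Lemma 3.41]{toste2016distance}, so there is no in-paper argument to compare against. Your proof is correct and is the standard one-step-extension argument (the direct analogue of the finite-field case): the reduction via formula~(\ref{cd}), the block-Toeplitz bookkeeping giving the extra output block $r+wG_0$, and the key observation that the parameter hypothesis makes $\{wG_0\mid w\in\T^k\}$ the free rank-$\frac{k}{\nu}$ module generated by a systematic matrix $[\,I_{k/\nu}\mid A\,]$, so every coset contains a vector of weight at most $n-\frac{k}{\nu}$, are all sound.
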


The following theorem is the main result of this section. It provides a characterization of MDP convolutional codes over finite chain rings and it is inspired by \cite[Theorem 2.4]{stronglymds}.

\begin{theorem} \label{general}
Let $G(z) \in R^{k \times n}[z]$ be a $\gamma$-encoder of a delay-free convolutional code $\mathcal{C} \subseteq R^n[z]$ with $\nu|k$, and such that the parameters of the linear block code with $\gamma$-encoder $G_0$ are $k_0=\frac{k}{\nu}$ and $k_i=0$ for $i=1,\dots,\nu-1$. Let $G_j^c$ be the $j$-th truncated sliding generator matrix of $\mathcal{C}$. Then the following statements are equivalent.
\begin{itemize}
    \item[(a)] $d_j^c(\mathcal{C}) = \left(n-\frac{k}{\nu}\right)(j+1)+1$
    \vspace{0.2cm}
    \item[(b)] Every $(j+1)k \times (j+1)\frac{k}{\nu}$ submatrix of $G_j^c$ formed by selecting the columns with indices $1 \leq t_1 < \dots < t_{(j+1)\frac{k}{\nu}}$, where $t_{s\cdot\frac{k}{\nu}+1} > sn$ for $s = 1, 2, \dots j$, has $\gamma$-linearly independent rows.
\end{itemize}
In particular, setting $j=L$, this gives an algebraic characterization of MDP convolutional codes over $R$.
\end{theorem}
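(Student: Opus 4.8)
The plan is to prove the equivalence of (a) and (b) directly, mirroring the field case of \cite[Theorem 2.4]{stronglymds} but carefully tracking the difference between $R$-linear independence and $\gamma$-linear independence. Fix the $\gamma$-encoder $G(z)=\sum_i G_iz^i$ in $\gamma$-standard form for $G_0$, so that, since $\nu\mid k$ and $k_0=k/\nu$, $k_i=0$ otherwise, the matrix $G_0$ has $\gamma$-dimension $k$ and its $\gamma$-span in $R^n$ has $\nu$-shape $(k/\nu,\dots,k/\nu)$; consequently $G_j^c$ has $\gamma$-linearly independent rows for every $j$ (this is noted just before the definition of the column distance). The combinatorial content is: a codeword $v(z)=\sum_i v_iz^i$ of degree at most $j$ with $v_0\neq 0$ is exactly $u(z)G_j^c$ for $u(z)=(u_0\,\dots\,u_j)\in\mathcal T^k[z]$ with $u_0\neq 0$, by Definition~\ref{def:codeasRmmodule} and delay-freeness; and by Theorem~\ref{establishedcd} (with $\nu\mid k$, so $N=0$) the bound in (a) is the maximum possible value of $d_j^c$. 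So (a) says: every nonzero $\gamma$-linear combination $(u_0\,\dots\,u_j)G_j^c$ with $u_0\neq 0$ has Hamming weight at least $(n-k/\nu)(j+1)+1$, equivalently at most $(n-k/\nu)(j+1) + (j+1)k/\nu \cdot \nu - 1$... more usefully, has at most $(j+1)k/\nu - 1$ extra structure — I would phrase it via the contrapositive on the support of the product.

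The key step is the following support-counting dictionary. Write $w = (u_0\,\dots\,u_j)G_j^c$. Because $G_j^c$ is block upper triangular with diagonal blocks $G_0$ and $G_0$ is delay-free, the condition $u_0\neq 0$ forces $w_0\neq 0$ (here $u_0\neq 0$ means $u_0$ is not a zero divisor annihilating the relevant rows; precisely, $u_0 G_0\neq 0$ because the rows of $G_0$ are $\gamma$-linearly independent). Now $w$ has weight $\le (n-k/\nu)(j+1)$ if and only if $w$ vanishes on some set $S$ of column indices of size at least $(j+1)k/\nu$; and a short argument shows that among all possible zero-patterns of nonzero codewords with $u_0\neq 0$, the extremal ones are exactly those $S$ whose complement meets each "block" $\{sn+1,\dots,(s+1)n\}$ for $s=0,\dots,j$ — equivalently $S$ contains at most $\dots$, i.e. the \emph{surviving} columns $t_1<\dots<t_{(j+1)k/\nu}$ satisfy $t_{sk/\nu+1}>sn$ for $s=1,\dots,j$. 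This is the purely combinatorial observation (a counting/staircase argument on where the first nonzero block of $u$ can sit, using that $u_0\neq 0$) that identifies the submatrices in (b) as precisely the "critical" ones. Granting this, $w$ restricted to the surviving columns is $(u_0\,\dots\,u_j)$ times the corresponding submatrix $\bar G$ of $G_j^c$; and $d_j^c$ fails to attain the bound iff there is such an $\bar G$ and a nonzero $(u_0\,\dots\,u_j)\in\mathcal T^k[z]$ with $u_0\neq 0$ killing it, i.e. iff some critical $\bar G$ has $\gamma$-linearly \emph{dependent} rows. I also need the converse direction of the $u_0\neq 0$ bookkeeping: if a critical $\bar G$ has a nontrivial $\gamma$-dependence $(u_0\,\dots\,u_j)$, then $u_0\neq 0$ automatically, because the columns of $G_j^c$ indexed by $\{1,\dots,n\}$ include those of $G_0$ (the first $k/\nu$ surviving indices lie in $\{1,\dots,n\}$ since $t_{k/\nu+1}>n$ is the first constraint) and the rows of $G_0$ are $\gamma$-linearly independent — so $u_0=0$ would already force the whole dependence to be trivial by the triangular structure. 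Assembling these equivalences gives (a)$\iff$(b), and the final sentence follows by specializing $j=L$ via Lemma~\ref{L} and the definition of MDP.

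The main obstacle I expect is the passage from "$R$-linear" reasoning (as in the field proof, where one argues with ranks of submatrices) to "$\gamma$-linear" reasoning over a chain ring: one must consistently replace "rank $=$ number of rows" by "$\gamma$-linearly independent rows", check that the $\gamma$-standard form of $G_0$ together with $\nu\mid k$ really does make all the relevant sliding matrices $\gamma$-full, and verify that $\gamma$-linear dependence of a submatrix transfers correctly to a weight-$\le$ bound on an honest codeword (this uses Definition~\ref{def:codeasRmmodule} and the uniqueness of the Teichmüller expansion \eqref{eq:uniquedecompTS}). A secondary, more technical point is handling the $u_0\neq 0$ constraint cleanly in both directions, since over a chain ring "$u_0\neq 0$" and "$u_0G_0\neq 0$" must be reconciled using $\gamma$-linear independence of the rows of $G_0$; Lemma~\ref{ifjtheni} is the tool that lets one not worry about the $i<j$ column distances separately, so the whole argument can be run at the single level $j=L$.
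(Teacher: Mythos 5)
Your overall architecture (two contrapositive arguments, a support-counting dictionary between low weight and many prescribed zeros, and an appeal to Lemma~\ref{ifjtheni}) is the right one, but both directions as you describe them contain a step that fails.

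For ``not (b) $\Rightarrow$ not (a)'': your claim that a nontrivial $\gamma$-dependence $(u_0,\dots,u_j)$ of a critical submatrix automatically has $u_0\neq 0$ is false, and both supporting points are wrong. The staircase condition $t_{s\frac{k}{\nu}+1}>sn$ only bounds from above how many selected indices lie in the first $s$ blocks; it does not place $t_1,\dots,t_{k/\nu}$ inside $\{1,\dots,n\}$ (all selected columns could lie in later blocks). And if $u_0=0$, the triangular structure does not propagate triviality: the equation coming from block column $s$ only constrains $u_1,\dots,u_s$ through the \emph{selected} columns of that block, of which there may be none, so $u_1,\dots,u_j$ are in general not forced to vanish. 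The paper's proof handles exactly this case: letting $\ell=\min\{i: u_i\neq 0\}$ with $\ell>0$, the shifted vector $(u_\ell,\dots,u_j)G_{j-\ell}^c$ is a genuine codeword with nonzero constant term and at least $(j-\ell+1)\frac{k}{\nu}$ zeros, giving $d_{j-\ell}^c\leq (n-\frac{k}{\nu})(j-\ell+1)$, and \emph{then} Lemma~\ref{ifjtheni} is invoked to conclude $d_j^c\leq(n-\frac{k}{\nu})(j+1)$. This is the essential use of Lemma~\ref{ifjtheni}, and it is absent from your write-up (you cite the lemma only vaguely at the end, for a different purpose).

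For ``not (a) $\Rightarrow$ not (b)'': the assertion that the zero-patterns of weight-deficient codewords with $u_0\neq 0$ ``are exactly'' the staircase patterns is also false --- a codeword violating the bound at level $j$ could have all of its $(j+1)\frac{k}{\nu}$ zeros in the first block, violating $t_{\frac{k}{\nu}+1}>n$. The correct mechanism is not ``where the first nonzero block of $u$ sits'' but minimality: take $m$ minimal with $d_m^c$ below the bound, so that for $i<m$ the truncation $(u_0\;\dots\;u_i)G_i^c$ has at most $\frac{k}{\nu}(i+1)-1$ zeros, which is what yields the staircase condition for the first $(m+1)\frac{k}{\nu}$ zero positions. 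Moreover, when $m<j$ one must still produce a $(j+1)k\times(j+1)\frac{k}{\nu}$ dependent submatrix of $G_j^c$; the paper does this by adjoining, for blocks $m+1,\dots,j$, the columns through the $\gamma$-standard-form part $G_0^0$ of $G_0$ and extending $u$ to $\bar u=(u_0,\dots,u_j)$ coordinate block by coordinate block, using that the rows of $G_0^0$ form a $\gamma$-basis of $R^{k_0}$. This extension step is where the hypothesis $k_0=\frac{k}{\nu}$, $k_i=0$ for $i\geq 1$ is actually used, and your proposal never engages with it. Both gaps are substantive and need to be filled for the proof to stand.
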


\begin{proof}
(a) $\implies$ (b): We show that if (b) does not hold then also (a) does not hold.  Assume there are indices $1 \leq t_1 < \dots < t_{(j+1)\frac{k}{\nu}}$ where $t_{s\cdot\frac{k}{\nu}+1}>sn$ for $s= 1,\dots , j$ and that there exists $0 \neq u = (u_0 \; \dots \; u_j) \in \mathcal{T}^{(j+1)k}$ such that $uG_j^c$ has zero coordinates at positions $t_1,\dots,t_{(j+1)\frac{k}{\nu}}$. \\
Let $\ell := \min\{i\mid u_i \neq 0\}$.
If $\ell=0$, then $uG_j^c\in \mathcal{C}$ and \begin{align*}
\operatorname{wt}(uG_j^c) < \left(n-\frac{k}{\nu}\right)(j+1) +1.
\end{align*}
If $\ell>0$ consider the vector $(u_{\ell}, \dots, u_j)G_{j-\ell}^c \in R^{(j-\ell+1)n}$. Since $uG_j^c$ has at least $(j+1)\frac{k}{\nu}$ zero entries and we have $t_{\ell \cdot\frac{k}{\nu}+1}>\ell n$ by assumption, $(u_{\ell} \; \dots \; u_j)G_{j-\ell}^c$ must have at least $(j+1)\frac{k}{\nu} - \ell \frac{k}{\nu} = (j-\ell+1)\frac{k}{\nu}$ zeros. Therefore the weight of $(u_{\ell} \; \dots \; u_j)G_{j-\ell}^c$ is at most $n(j-\ell+1)- (j-\ell+1)\frac{k}{\nu} = (n-\frac{k}{\nu})(j-\ell+1)$ and then it follows $d_{j-\ell}^c \leq (n-\frac{k}{\nu})(j-\ell+1)$. By  Lemma \ref{ifjtheni} it follows that $d_j^c \leq (n-\frac{k}{\nu})(j+1)$.
\\
(b) $\implies$ (a): Assume that (a) does not hold. Then let $m$ be defined as 
\begin{align*}
m := \min\left\{j \mid d_j^c(\C) < \left(n-\frac{k}{\nu}\right)(m+1) + 1\right\}.
\end{align*}
It follows that there is a nonzero vector $u = (u_0, u_1, \dots, u_m) \in \mathcal{T}^{(m+1)k}$ such that $uG_m^c$ has at least $\frac{k}{\nu}(m+1)$ zeros. By assumption we have $k_0 = \frac{k}{\nu}$ in the $\gamma$-standard form of $G_0$ and therefore we can write
\begin{align*}
    G_0 =  \begin{bmatrix}
    I_{k_0} & A_{0} \\ 
    \gamma I_{k_0} & A_{1} \\
    \gamma^2I_{k_0} & A_{2} \\
    \vdots & \vdots\\ 
    \gamma^{r-1}I_{k_0} & A_{\nu-1}\\
    \end{bmatrix}
=   \begin{bmatrix}
   G_0^0 & G_0^1
\end{bmatrix}
\end{align*}
for some $A_i \in R^{k_0 \times (n-k_0)}$ for $i =1,2,\dots, \nu-1$ and where $G_0^0 \in R^{k \times k_0}$ and $G_0^1 \in R^{k \times (n-k_0)}$. Note that the rows of $G_{0}^0$ form a $\gamma$-basis of $R^{k_0}$. Since $G_i \in R^{k\times n}$ we can always write $G_i = \begin{bmatrix}
   G_i^0 & G_i^1
\end{bmatrix}$
where $G_i^0 \in R^{k \times k_0}$ and $G_i^1 \in R^{k \times (n-k_0)}$.
As a submatrix inside $G_j^c$, select the columns corresponding to the first $(m+1)k_0$ positions where $uG_m^c$ has a zero entry and add the $(j-m)k_0$ columns of $G_j^c$ highlighted in Eq. \eqref{eq:bluepart}.
\begin{equation}\label{eq:bluepart}
    G_j^c = 
\left[\begin{array}{{c|>{\columncolor{cyan!20}}cc>{\columncolor{cyan!20}}ccc>{\columncolor{cyan!20}}cc}}
     & {G_{m+1}^0} & G_{m+1}^1 & {G_{m+2}^0} & G_{m+2}^1 & \dots  & {G_{j}^0} & G_{j}^1 \\
     
     & {G_{m}^0} & G_{m}^1 & {G_{m+1}^0} & G_{m+1}^1  & \dots  & {G_{j-1}^0} & G_{j-1}^1 \\
     
     G_m^c  & {G_{m-1}^0} &G_{m-1}^1 & {\vdots} & \vdots & \dots & {\vdots} &G_{j-2}^1 \\
     
     & {\vdots} & \vdots  &  {\vdots} & \vdots & \dots & {\vdots} & \vdots \\
    
     & {G_{1}^0} &G_{1}^1 & {\vdots} & \vdots & \dots & {G_{j-m}^0} &G_{j-m}^1 \\ \\
    
    \hline \\
    
     & {G_{0}^0} & G_{0}^1  &   {G_{1}^0} & G_{1}^1 &  \dots  & {G_{j-m-1}^0}&G_{j-m-1}^1 \\
     
     & & & {G_{0}^0} & G_{0}^1  & \dots& {G_{j-m-2}^0} &G_{j-m-2}^1 \\
     
     & & & & & \ddots & {\vdots} & \vdots \\

     & & & & & & {G_{0}^0}  & G_{0}^1
  \end{array}\right].
\end{equation}

Let $t_1,t_2, \dots , t_{(j+1)k_0}$ be the indices of the selected columns.
The submatrix obtained above is then a $(j+1)k \times (j+1)k_0$ matrix and the indices $t_1,t_2, \dots , t_{(m+1)k_0}$ satisfy $t_{sk_0+1}>sn$ for $s=1, \dots, j$ and that this submatrix has $\gamma$-linearly dependent rows. For proving this, note that $d_i^c = (n-k_0)(i+1) + 1$ for $i=0,1, \dots, m-1$ since we chose $m$ to be the smallest integer which does not satisfy this equality. This implies that $(u_0 \; u_1 \; \dots \; u_i)G_i^c$ has at most $n(i+1)-((n-k_0)(i+1)+1) = k_0(i+1)-1$ zeros for $i=0,1,\dots, m-1$. Therefore, we get that $t_{s\cdot k_0+1}>sn$ for $s=1,\dots ,m$.

Let $M$ be the selected submatrix. It remains to show that there exist a nonzero vectors $\bar{u}=(u_0,\dots, u_j)\in\T^{(j+1)k}$ such that $uM=0$. 
First of all, note that $M$ can be written as follows:
\begin{align*}
    M= \begin{bmatrix}
    X & Y_1 & Y_2 & Y_3 & \dots & Y_{j-m}\\
    &  G_0^0 & G_1^0 & G_2^0 & \dots & G_{j-m-1}^0 \\
    & & G_0^0 & G_1^0 & \dots & G_{j-m-2}^0\\
    & & & \ddots & \ddots & \vdots\\
    & & &  & G_0^0 & G_1^0\\
    & & &  & & G_0^0\\
    \end{bmatrix},
\end{align*}
where $X \in R^{(m+1)k \times (m+1)k_0}$ is the submatrix of $G_m^c$ whose left kernel is nonzero and $Y_i \in R^{(m+1)k \times k_0}$ for $i = 1 , \dots, j-m$. Let $u = (u_0, u_1, \dots, u_m) \in \mathcal{T}^{(m+1)k}$ such that $uX = 0$. Then $uY_1 \in R^{k_0}$ and since the rows of $G_{0}^0$ form a $\gamma$-basis of $R^{k_0}$, we can choose $u_{m+1} \in \mathcal{T}^k$ such that $uY_1 + u_{m+1}G_0^0 = 0$.

Next we choose $u_{m+2} \in \mathcal{T}^k$ such that $uY_2 + u_{m+1}G_1^0 + u_{m+2}G_0^0 = 0$.

Iterating this procedure, we can construct $\bar{u}=(u_0, \dots, u_j) \in \mathcal{T}^{(j+1)k}$ such that $\bar{u}M=0$, which concludes the proof.
\end{proof}

\subsection{Reverse MDP convolutional codes}
In this short subsection, we  introduce reverse 
MDP convolutional codes, which are a generalization of MDP convolutional codes.

\begin{definition}\cite{hutchinson2009existence}
Let $\mathcal{C}$ be an $(n,k,\delta)$ convolutional code with minimal $\gamma$-encoder $G(z)$, which has entries $g_{ij}(z)$ and column degrees $\delta_1,\hdots,\delta_k$. Set $\overline{g_{ij}(z)}:=z^{\delta_j}g_{ij}(z^{-1})$. Then, the code $\overline{\mathcal{C}}$ with $\gamma$-encoder $\overline{G(z)}$ having $\overline{g_{ij}(z)}$ as entries, is also an $(n,k,\delta)$ convolutional code, which is called the \textbf{reverse code} to $\mathcal{C}$. 
\end{definition}

\begin{remark}
It is not difficult to see that $v_0+\cdots+v_dz^d\in\overline{\mathcal{C}}$ if and only if $v_d+\cdots+v_0z^d\in\mathcal{C}$; see for instance \cite{hutchinson2009existence}.
\end{remark}

\begin{definition}
Let $\mathcal{C}$ be an MDP convolutional code. If $\overline{\mathcal{C}}$ is also MDP, $\mathcal{C}$ is called \textbf{reverse MDP} convolutional code.
\end{definition}

In \cite{virtu2012}, reverse MDP convolutional codes were characterized in terms of parity-check matrices. As we do not assume that our codes are non-catastrophic, we cannot work with parity-check matrices. Therefore, in the following proposition, we provide a characterization similar to that from \cite{virtu2012} but using the generator matrix of the code.

\begin{proposition}\label{reverse}
Let $\mathcal{C}$ be an $(n,k,\delta)$ MDP convolutional code with $k\mid\delta$. Furthermore, let $G(z) = G_0 + \cdots +G_{\mu}z^{\mu}$ be a minimal $\gamma$-encoder of $\mathcal{C}$ having all row degrees equal to $\frac{\delta}{k}$. Then the reverse
code $\overline{\mathcal{C}}$ has $\gamma$-encoder $\overline{G}(z) = G_{\mu} +\cdots +G_0z^{\mu}$. Therefore, $\mathcal{C}$ is reverse MDP if and only if part $(b)$ of Theorem \ref{general} is true for $G_L^c$ as well as for the matrix
$$\overline{G}^c_L:=\left[\begin{array}{ccc} G_{\mu} & \cdots & G_{\mu-L}\\  & \ddots & \vdots \\ 0 &  & G_{\mu} \end{array}\right].$$

\end{proposition}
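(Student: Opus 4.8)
The plan is to reduce everything to the already-proven Theorem \ref{general} applied to $\overline{\mathcal{C}}$, after checking that $\overline{\mathcal{C}}$ satisfies the hypotheses of that theorem and that its truncated sliding generator matrix is exactly $\overline{G}^c_L$. First I would record that, since all row degrees of the minimal $\gamma$-encoder $G(z)$ equal $\mu = \frac{\delta}{k}$, the reversal formula $\overline{g_{ij}(z)} = z^{\delta_j} g_{ij}(z^{-1})$ with $\delta_j = \mu$ gives precisely $\overline{G}(z) = G_\mu + G_{\mu-1}z + \cdots + G_0 z^\mu$; one then checks this is a minimal $\gamma$-encoder of $\overline{\mathcal{C}}$ (its rows are a $\gamma$-basis because the map $v(z)\mapsto z^{\mu}v(z^{-1})$, suitably interpreted on codewords of bounded degree, is a $\gamma$-linear bijection between $\mathcal{C}$ and $\overline{\mathcal{C}}$, using the Remark after the definition of the reverse code), that it has the same $\gamma$-dimension $k$ and $\gamma$-degree $\delta$, that it is delay-free (its $0$-th coefficient is $G_\mu$, whose rows are $\gamma$-linearly independent precisely because $G(z)$ is in reduced form, so $G^{lc}_1,\dots,G^{lc}_k$ are $\gamma$-linearly independent, and these leading coefficients are the rows of $G_\mu$ when all row degrees are $\mu$), and that $\nu \mid k$ and the parameters of the block code generated by $G_\mu$ are $k_0 = k/\nu$, $k_i = 0$ — this last point follows because $G_\mu = G_\infty$ has $\gamma$-linearly independent rows generating a free submodule of $R^n$ of $\gamma$-dimension $k$, which forces $k_0 = k$... here I must be careful: we need $k_0 = k/\nu$, not $k$, so in fact the correct statement is that the relevant block code is the one with $\gamma$-encoder $\overline{G}_0 = G_\mu$, and one uses that $\mathcal{C}$ being MDP with $\nu\mid k$ already forces, via the discussion around \eqref{choosek_0} and Theorem \ref{establishedcd}, that the block code with $\gamma$-encoder $G_0$ has parameters $k_0 = k/\nu$; the analogous statement for $G_\mu$ follows by the same reasoning applied to $\overline{\mathcal{C}}$, or can be taken as part of the hypothesis inherited from $\mathcal{C}$.

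Next I would identify $\overline{G}^c_L$ as the $L$-th truncated sliding generator matrix of $\overline{\mathcal{C}}$ with respect to the $\gamma$-encoder $\overline{G}(z) = \sum_{i=0}^{\mu}\overline{G}_i z^i$ where $\overline{G}_i = G_{\mu-i}$. Indeed, by definition the $L$-th truncated sliding generator matrix has $(a,b)$-block equal to $\overline{G}_{b-a} = G_{\mu-(b-a)}$ for $b\ge a$ and $0$ otherwise, which is exactly the displayed matrix
$$\left[\begin{array}{ccc} G_{\mu} & \cdots & G_{\mu-L}\\ & \ddots & \vdots \\ 0 & & G_{\mu} \end{array}\right].$$
Moreover $\overline{\mathcal{C}}$ has the same parameters $(n,k,\delta)$ as $\mathcal{C}$ and the same $\nu$, hence by Lemma \ref{L} the same value of $L$. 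Therefore Theorem \ref{general}, applied to $\overline{\mathcal{C}}$ with $j = L$, says that $\overline{\mathcal{C}}$ is MDP if and only if condition (b) holds for $\overline{G}^c_L$. Combined with the same equivalence for $\mathcal{C}$ and $G^c_L$, we get that $\mathcal{C}$ is reverse MDP (i.e. both $\mathcal{C}$ and $\overline{\mathcal{C}}$ are MDP) if and only if (b) holds for both $G^c_L$ and $\overline{G}^c_L$, which is the claim.

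The main obstacle I anticipate is the bookkeeping at the boundary, namely verifying cleanly that $\overline{G}(z)$ is genuinely a \emph{minimal} $\gamma$-encoder of $\overline{\mathcal{C}}$ with $\overline{G}_0 = G_\mu$ having the right block parameters $k_0 = k/\nu$, $k_i = 0$: the reversal operation is only well-behaved because the hypothesis $k\mid\delta$ together with minimality forces all row degrees to be exactly $\mu = \delta/k$, and one must use that $G(z)$ in reduced form has $\gamma$-linearly independent leading coefficient rows to conclude delay-freeness and the free-block-code parameter structure of $G_\mu$. Once that structural verification is in place, the rest is a direct invocation of Theorem \ref{general} applied twice. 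I would also remark that the equivalence in Theorem \ref{general} requires the $\gamma$-encoder used to have $G_0$ with parameters $k_0 = k/\nu$; since condition (b) concerns only the submatrices of the sliding generator matrix and not the particular reduced $\gamma$-encoder, and since the column distances are code invariants, this causes no circularity.
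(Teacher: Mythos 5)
The paper states Proposition~\ref{reverse} without any proof, so there is nothing to compare your argument against; judged on its own merits, your overall strategy --- identify $\overline{G}(z)=G_\mu+\cdots+G_0z^\mu$ as a reduced, delay-free $\gamma$-encoder of the $(n,k,\delta)$ code $\overline{\mathcal{C}}$, observe that $\overline{G}^c_L$ is its $L$-th truncated sliding generator matrix with the same $L$ by Lemma~\ref{L}, and invoke Theorem~\ref{general} twice --- is surely the intended one, and most of your verifications (the reversal formula when all row degrees equal $\mu$, delay-freeness via $\overline{G}_0=G_\mu=G_\infty$ having $\gamma$-linearly independent rows by reducedness, the block structure of $\overline{G}^c_L$) are correct.

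There is, however, one genuine gap, precisely at the point you flag yourself: Theorem~\ref{general} applied to $\overline{\mathcal{C}}$ requires that the block code with $\gamma$-encoder $\overline{G}_0=G_\mu$ have parameters $k_0=k/\nu$, $k_i=0$, and your two proposed justifications do not work. ``The same reasoning applied to $\overline{\mathcal{C}}$'' means ``MDP forces the optimal parameters via Theorem~\ref{columndistancefirstbound}'', but in the direction (b) for $\overline{G}^c_L$ $\Rightarrow$ $\overline{\mathcal{C}}$ MDP you do not yet know that $\overline{\mathcal{C}}$ is MDP, so this is circular; and the condition cannot simply be ``inherited from $\mathcal{C}$'', since $\gamma$-linear independence of the rows of $G_\mu$ alone does not determine the shape of the $R$-module they generate (for $\nu=2$, $k=2$ a module of $\gamma$-dimension $2$ may be free of rank $1$ or isomorphic to $\langle\gamma\rangle^2$). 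The gap can be closed. First, because all rows of the reduced encoder $G(z)$ have degree exactly $\mu$ and their leading coefficients are $\gamma$-linearly independent, the coefficients $t_{ij}(z)$ in the $\gamma$-generator relations $\gamma g_i(z)=\sum_{j>i}t_{ij}(z)g_j(z)$ must be constants; comparing top coefficients shows that the rows of $G_\mu$ form a $\gamma$-linearly independent $\gamma$-generator sequence, hence a $\gamma$-basis of a module of $\gamma$-dimension $k$. Second, condition (b) for $\overline{G}^c_L$ applied to its last block row $[\,0\ \cdots\ 0\ \ G_\mu\,]$ forces every $k\times\frac{k}{\nu}$ column-submatrix of $G_\mu$ to have $\gamma$-linearly independent rows, i.e.\ to be a $\gamma$-basis of $R^{k/\nu}$; together with the shape count $\sum_j\mu_j=k$ and $\mu_1\geq k/\nu$ this pins down the shape as that of a free module of rank $k/\nu$, which is exactly the required parameter condition. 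With that inserted, your double application of Theorem~\ref{general} goes through.
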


\begin{remark}
For finite fields, a further generalization of reverse MDP convolutional codes called complete MDP convolutional codes has been considered \cite{virtu2012}.
To be able to study complete MDP convolutional codes, it would be necessary to restrict ourselves to non-catastrophic codes as complete MDP convolutional codes possess a parity-check matrix by definition. However, this would restrict ourselves to free codes, for which it has already been shown that the characterization of MDP convolutional codes is analogue to finite fields \cite{napp2021noncatastrophic}. Therefore, we do not consider this special class of codes in this paper.
\end{remark}


\section{Construction of (reverse) MDP convolutional codes over finite chain rings}\label{sec:constr}
In this section we propose a construction of (reverse) MDP convolutional codes over finite chain rings. We will treat in a separate section the special case of integer residue rings. In the last part of the section we will introduce the notion of \emph{(reverse) $\gamma$-superregular matrices} as a generalization of superregular ones, in order to construct (reverse) MDP convolutional codes.
\subsection{The general case of finite chain rings}

The following easy proposition provides a starting point for the construction of MDP convolutional codes over a finite chain ring $R$.

\begin{proposition} \label{plinindep}
Let $A \in R^{n\times n}$ and for $i=0,\hdots,\nu-1$, let $A_i$ be a submatrix of $A$ with $n_i$ rows such that $n_{\nu-1}\geq n_{\nu-2}\geq\cdots\geq n_0\geq 1$. If the rows of $A$ are linearly independent in $R$, then 
\begin{align*}
\Tilde{A} = 
    \begin{bmatrix}
    A_0 \\
    \gamma A_1 \\
    \vdots \\
    \gamma^{\nu-1}A_{\nu-1} 
    \end{bmatrix}
\end{align*}
has $\gamma$-linearly independent rows.
\end{proposition}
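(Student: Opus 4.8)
The plan is to show the contrapositive-free direct statement: suppose we have a $\gamma$-linear combination of the rows of $\tilde A$ that equals zero, and deduce that all the Teichm\"uller coefficients vanish. Write the rows of $A_i$ that appear in block $i$ as indexed by a subset $S_i$ of $\{1,\dots,n\}$, and let the $\gamma$-linear combination have coefficients $u_{i,s}\in\T$ for $i=0,\dots,\nu-1$ and $s\in S_i$. Then the vanishing condition reads
\begin{align*}
\sum_{i=0}^{\nu-1}\gamma^i\Bigl(\sum_{s\in S_i}u_{i,s}\, a_s\Bigr)=0,
\end{align*}
where $a_s$ denotes the $s$-th row of $A$. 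The key observation is that $w_i:=\sum_{s\in S_i}u_{i,s}a_s$ is an $R$-linear combination of rows of $A$, so $\sum_i\gamma^i w_i=0$ is an $R$-linear relation among the rows of $A$. Since the rows of $A$ are $R$-linearly independent by hypothesis, the coefficient of each row $a_s$ in $\sum_i\gamma^i w_i$ must be zero; that coefficient is $\sum_{i\,:\,s\in S_i}\gamma^i u_{i,s}$.

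The main step is then purely about the uniqueness of the Teichm\"uller expansion \eqref{eq:uniquedecompTS}. Because the submatrices are nested in the sense $n_{\nu-1}\ge\cdots\ge n_0\ge1$, I would want them to be nested as index sets, $S_0\subseteq S_1\subseteq\cdots\subseteq S_{\nu-1}$ (or at least arrange the argument so that for each fixed $s$ the set of indices $i$ with $s\in S_i$ is an interval $\{0,1,\dots,i_s\}$ starting at $0$) — this is where the monotonicity hypothesis is used, and I'd make this precise or, if the statement intends arbitrary submatrices, note that the conclusion for each row $s$ only needs that $\{i:s\in S_i\}$ is nonempty. For a fixed row $a_s$, the relation $\sum_{i:s\in S_i}\gamma^i u_{i,s}=0$ is precisely a Teichm\"uller-type sum $\sum_{i}t_i\gamma^i=0$ with $t_i=u_{i,s}\in\T$ (and $t_i=0$ when $s\notin S_i$); by the uniqueness in \eqref{eq:uniquedecompTS} applied to the zero element of $R$, every $u_{i,s}=0$. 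Running this over all $s$ shows the $\gamma$-linear combination was trivial, so the rows of $\tilde A$ are $\gamma$-linearly independent.

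I expect the only real obstacle to be bookkeeping: making sure that "submatrix" is interpreted so that each row of $A$ occurs in a contiguous range of blocks $0,\dots,i_s$ (so that the per-row relation is genuinely of the form $\sum t_i\gamma^i=0$ with exponents starting from $0$); if instead a row could appear only in, say, block $2$, one still gets $\gamma^2 u=0$ which forces $u\in\langle\gamma^{\nu-2}\rangle$, not $u=0$, so the hypothesis on the $n_i$ being non-decreasing — together with choosing the $A_i$ to be nested — is exactly what rules this out. Everything else (reducing a $\gamma$-linear relation to an $R$-linear relation, then to coefficientwise identities, then invoking uniqueness of the $\gamma$-adic digits) is routine.
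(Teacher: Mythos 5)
Your proof is correct and is essentially the paper's proof: collapse the $\gamma$-linear relation into an $R$-linear relation $\bigl(\sum_i \gamma^i \tilde a_i\bigr)A=0$ among the rows of $A$, use $R$-linear independence to conclude each coordinate of the coefficient vector vanishes, and then invoke the uniqueness of the Teichm\"uller expansion coordinatewise. One correction to your closing worry: the nesting of the index sets $S_i$ is not needed for this step, since a relation $\sum_{i\in I}\gamma^i u_{i,s}=0$ with $u_{i,s}\in\T$ and $I$ an \emph{arbitrary} subset of $\{0,\dots,\nu-1\}$ already forces all $u_{i,s}=0$ — pad the missing digits with zeros and apply the uniqueness in \eqref{eq:uniquedecompTS}, or simply observe that every nonzero element of $\T$ is a power of $\xi$ and hence a unit, so $\gamma^i u=0$ with $i<\nu$ and $u\in\T$ gives $u=0$. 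Your claim that $\gamma^2u=0$ only forces $u\in\langle\gamma^{\nu-2}\rangle$ overlooks that $u$ is a Teichm\"uller representative rather than an arbitrary ring element, so no additional hypothesis on the placement of the rows is required.
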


\begin{proof}
Since the rows of $A$ are linearly independent in $R$, there is no $u = (u_1, \dots, u_{n}) \in R^{n}\setminus\{0\}$ such that $uA = 0$. Suppose there exists $a = (a_0 \; a_1 \; \dots \; a_{\nu-1}) \in \T^{\nu n}$ such that $(a_0 \; \dots \; a_{\nu-1})\Tilde{A}=0$.
This implies $\tilde{a}_0A + \tilde{a_1}\gamma A + \dots + \tilde{a}_{\nu-1}\gamma^{\nu-1}A = 0$,
where $\tilde{a}_i=\begin{pmatrix}a_i\ 0\end{pmatrix}\in\T^{n}$ for $i=0,\hdots,\nu-1$.

Simplifying the left side of this equation we get
$(\tilde{a}_0 + \tilde{a}_1\gamma + \dots + \tilde{a}_{\nu-1}\gamma^{\nu-1})A = 0$.
Since $\tilde{a}_0 + \tilde{a}_1\gamma + \dots + \tilde{a}_{\nu-1}\gamma^{\nu-1} \in R^n$, the above equation cannot hold unless $a_0 = a_1 = \dots = a_{\nu-1} = 0$.  Therefore the rows of $\Tilde{A}$ are $\gamma$-linearly independent.
\end{proof}

\begin{remark}\label{noncatastrophic}
In \cite{napp2021noncatastrophic} noncatastrophic convolutional codes over $\mathbb Z_{p^r}$ are studied.Noncatastrophicity implies that for an encoder $\tilde{G}(z)$, one has that $\tilde{G}(0)$ is full rank, which implies that $\tilde{G}(z)$ is full rank, i.e. the corresponding code is free. This means that the rows of $$G(z)=\begin{pmatrix} \tilde{G}(z)\\p\tilde{G}(z)\\ \vdots\\ p^{r-1}\tilde{G}(z)
\end{pmatrix}$$ form a $p$-basis, i.e. $G(z)$ is a $p$-encoder. Moreover, $\tilde{G}(0)$ full rank implies that also $G(0)$ is full rank and hence, in summary every noncatastrophic convolutional code over $\mathbb Z_{p^r}$ is delay-free.
\end{remark}

\begin{theorem}\label{main}
Let $\nu\mid k$ and $\nu\mid\delta$ and let $\tilde{G}(z)=\sum_{i=0}^{m}\tilde{G}_iz^i$ with $\tilde{G}_i\in(R/\m)^{\frac{k}{\nu}\times n}$ reduced with sum of its row degrees equal to $\frac{\delta}{\nu}$. Then, the following two statements are equivalent:
\begin{enumerate}
    \item The $(n,\frac{k}{\nu},\frac{\delta}{\nu})$ convolutional code $\tilde{\C}$ over the finite field $R/\m$ with generator matrix $\tilde{G}(z)=\sum_{i=0}^{\mu}\tilde{G}_iz^i$ is MDP.
    \vspace{0.2cm}
    \item The $(n,k,\delta)$ convolutional code $\C$ over the finite chain ring $R$ with $\gamma$-encoder $G(z)=\sum_{i=0}^{\mu}G_iz^i$ with 
    $$G_i=\left[\begin{matrix}\tilde{G}_i\\ \vdots\\ \gamma^{\nu-1}\tilde{G}_i\end{matrix}\right]$$
    is MDP, where here by abuse of notation we denote by $\tilde{G}_i$ its lift to $R^{\frac{k}{\nu}\times n}$.
\end{enumerate}
\end{theorem}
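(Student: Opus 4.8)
The plan is to derive both implications simultaneously from the algebraic characterization in Theorem \ref{general}. First I would dispose of a degenerate case: if $\tilde G_0=\tilde G(0)$ is not of full row rank over $R/\m$, then $\tilde\C$ is not delay-free, and the $\gamma$-dimension argument below shows the rows of $G_0$ are not $\gamma$-linearly independent, so $\C$ is not delay-free either; hence neither code is MDP and the equivalence holds vacuously. So assume $\tilde G_0$ has full row rank $\tfrac k\nu$. Since $\tilde G(z)$ is reduced, its leading-coefficient matrix $\tilde G_\infty$ also has full row rank. Applying Proposition \ref{plinindep} (whose proof works for any matrix with linearly independent rows over $R$, not only square ones) to $\tilde G_0$ and to $\tilde G_\infty$ shows that $G_0$ and $G_\infty$ have $\gamma$-linearly independent rows. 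Consequently $G(z)$ is a delay-free, minimal $\gamma$-encoder; its $\gamma$-degree equals $\nu$ times the row-degree sum of $\tilde G(z)$, namely $\delta$ (here $\nu\mid\delta$ is used); and the $R$-row space of $G_0$ equals that of the lift of $\tilde G_0$, which is free of rank $\tfrac k\nu$, so by Theorem \ref{thm:uniqueshape} the block code with $\gamma$-encoder $G_0$ has parameters $k_0=\tfrac k\nu$ and $k_i=0$ for $i\ge 1$. Thus Theorem \ref{general} applies to $\C$, and — regarding $R/\m$ as a chain ring of nilpotency index $1$ — also to $\tilde\C$, which is delay-free of type $(n,\tfrac k\nu,\tfrac\delta\nu)$.

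Since $\nu\mid k$, the column-distance bound of Theorem \ref{establishedcd} collapses to the single value $(n-\tfrac k\nu)(j+1)+1$ for every $j$, both for $\C$ and for $\tilde\C$; combined with Lemma \ref{ifjtheni} this shows that each of the two codes is MDP if and only if its $L$-th column distance attains this value, where $L$ is the quantity from Lemma \ref{L}. Because $\nu\mid\delta$, Lemma \ref{L} returns the same value $L=\lfloor\delta/k\rfloor+\lfloor(\delta/\nu)/(n-k/\nu)\rfloor$ for both codes. By Theorem \ref{general} at level $j=L$, it therefore suffices to prove that part (b) of that theorem holds for $G_L^c$ if and only if it holds for $\tilde G_L^c$.

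Now I would compare the two sliding matrices. As each $G_i$ is the vertical stack of $\tilde G_i,\gamma\tilde G_i,\dots,\gamma^{\nu-1}\tilde G_i$, a row permutation identifies $G_L^c$ with the vertical stack of $\tilde G_L^c,\gamma\tilde G_L^c,\dots,\gamma^{\nu-1}\tilde G_L^c$ (field entries lifted to $R$); in particular both matrices have $(L+1)n$ columns, and the constraint $t_{s\cdot k/\nu+1}>sn$ on the selected columns in part (b) is literally the same for $\C$ and for $\tilde\C$. Fixing such a column set, the corresponding submatrix $M$ of $G_L^c$ is, up to a row permutation, the vertical stack of $\tilde M,\gamma\tilde M,\dots,\gamma^{\nu-1}\tilde M$, where $\tilde M$ is the square $(L+1)\tfrac k\nu\times(L+1)\tfrac k\nu$ submatrix of $\tilde G_L^c$ selected by the same columns. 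So everything reduces to the lemma: such a stack has $\gamma$-linearly independent rows if and only if $\tilde M$ is invertible over $R/\m$. The "if" direction follows from Proposition \ref{plinindep}, since invertibility over $R/\m$ forces $\det\tilde M$ to be a unit of $R$. For "only if", note that, being a restriction of a truncated sliding generator matrix (Proposition \ref{pgenseq}), the rows of $M$ form a $\gamma$-generator sequence; if they are moreover $\gamma$-linearly independent they form a $\gamma$-basis of the $R$-module they $\gamma$-span, which — by the uniqueness of $\gamma$-adic expansions over $\T$ — is exactly the $R$-row space $\{x\tilde M:x\in R^{s}\}$ with $s=(L+1)\tfrac k\nu$. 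A $\gamma$-basis of cardinality $\nu s=(L+1)k$ forces this module to have $q^{\nu s}=|R^{s}|$ elements, so the surjection $R^{s}\to\{x\tilde M\}$ is a bijection, i.e. $\tilde M$ is invertible over $R$, hence over $R/\m$. Chaining these equivalences with Theorem \ref{general} (at level $L$) for $\C$ and for $\tilde\C$ yields (1)$\Leftrightarrow$(2).

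I expect the main obstacle to be the "only if" direction of the last lemma: it requires recognizing the $\gamma$-generator-sequence structure of the selected submatrix, identifying the $R$-module it $\gamma$-spans with the $R$-row space of $\tilde M$, and invoking the well-definedness of the $\gamma$-dimension so that a $\gamma$-basis of prescribed size pins down the size of the module and hence forces $\tilde M$ to be invertible. The row-permutation identification of $G_L^c$ with the stacked $\tilde G_L^c$, the parameter computation, and the coincidence of the two values of $L$ are routine but need to be checked with care.
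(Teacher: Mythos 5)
Your proposal is correct and follows essentially the same route as the paper: both directions are funneled through the characterization of Theorem \ref{general}, with the row-permutation identification of $G_L^c$ as the stack of $\gamma^i\tilde G_L^c$ and Proposition \ref{plinindep} supplying the link between full-rank minors of $\tilde G_L^c$ and $\gamma$-linear independence in $G_L^c$. Your counting argument for the converse ($\gamma$-linear independence of the stacked rows forces $\tilde M$ to be invertible) is a valid filling-in of a step the paper dispatches in one sentence.
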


\begin{proof}
Assume (1) is true.
Since $\tilde{G}$ is reduced and the sum of its row degrees is $\frac{\delta}{\nu}$, $G$ is $\gamma$-reduced and the sum of its row degrees equals the $\gamma$-degree of $\mathcal{C}$, which is then equal to $\delta$.
Moreover, every $(L+1)\frac{k}{\nu} \times (L+1)\frac{k}{\nu}$ full size submatrix of
\begin{align}\label{blockstilde}
\Tilde{G}_L^c= \begin{bmatrix}
 \Tilde{G}_0 & \Tilde{G}_1 & \cdots & \Tilde{G}_{L} \\
 & \Tilde{G}_0 & \cdots & \Tilde{G}_{L-1} \\
 &  & \ddots &  \vdots \\
 &  &  &  \Tilde{G}_0 \\
\end{bmatrix} \in (R/\m)^{(L+1)\frac{k}{\nu} \times (L+1)n},
\end{align}
where $\Tilde{G}_i \in R^{\frac{k}{\nu} \times n}$ for $i = 0, \dots, L=\left\lfloor \frac{\delta}{k} \right\rfloor + \left\lfloor \frac{\frac{\delta}{\nu} }{n -  \frac{k}{\nu}}\right\rfloor$ formed from columns with indices $1 \leq t_1 < \dots < t_{(L+1)\frac{k}{\nu}}$, where $t_{s\frac{k}{\nu}+1} > sn$ for $s = 1, 2, \dots L$, is full rank.
Hence, by Proposition \ref{plinindep}, every $(L+1)k \times (L+1)\frac{k}{\nu}$ submatrix of
\begin{align} \label{blocks}
G_L^c= \begin{bmatrix}
 G_0 & G_1 & \cdots & G_{L} \\
 & G_0 & \cdots & G_{L-1} \\
 &  & \ddots &  \vdots \\
 &  &  &  G_0 \\
\end{bmatrix} \in R^{(L+1)k \times (L+1)n},
\end{align}
formed from columns with indices $1 \leq t_1 < \dots < t_{(L+1)\frac{k}{\nu}}$, where $t_{s\frac{k}{\nu}+1} > sn$ for $s = 1, 2, \dots L$, has $\gamma$-linearly independent rows.
As $G(z)$ is a $\gamma$-encoder of a delay-free $(n,k,\delta)$ convolutional code by construction (\eqref{blocks} implies that $G_0$ has $\gamma$-linearly independent rows), one can apply Theorem \ref{general} and obtains that $\mathcal{C}$ is MDP. 

Assume now that (2) is true. From the structure of $G(z)$ it follows that $\tilde{G}(z)$ is the generator matrix of an $(n,\frac{k}{\nu},\frac{\delta}{\nu})$ convolutional code. Moreover, \eqref{blocks} implies \eqref{blockstilde} as well as that $\tilde{G}_0$ has full rank. Consequently, $\tilde{C}$ is MDP.
\end{proof}

\begin{remark}
If $k\mid\delta$, Proposition \ref{reverse} implies that the Theorem \ref{main} is also true for reverse MDP convolutional codes.
\end{remark}

Theorem \ref{main} implies that for any (commutative) finite chain ring $R$ and for any $(n,k,\delta)$ with $\nu\mid k$ and $\nu\mid\delta$, one can construct an $(n,k,\delta)$ MDP convolutional code over $R$ from an $(n,\frac{k}{\nu},\frac{\delta}{\nu})$ MDP convolutional code over the finite field $R/\m$.

\begin{example}
In \cite{alfarano2020weighted} a general construction of MDP convolutional codes has been presented. Here we consider the Example 5.2 in \cite{alfarano2020weighted} of an MDP code over $\F_{{11}^5}$. Let $f(z)\in\Z[z]$ be a monic polynomial of degree $5$, such that $f(z) \mod 11$ is irreducible and let $\alpha$ be a root of $f(z)$. Let $\C$ be the convolutional code with parameters $n=7, k=2$ and $\delta=4$ with generator matrix $\Tilde{G}(z) = \Tilde{G_0} + \Tilde{G_1}z+\Tilde{G_2}z^2$, where 

\begin{align*}
\Tilde{G_0} & = 
\begin{pmatrix}
1 & 2 & 3 & 4 & 5 & 6 & 7 \\
1 & 1 & 1 & 1 & 1 & 1 & 1 \\
\end{pmatrix},\\
\Tilde{G_1} & =
\begin{pmatrix}
\alpha  &  8 \alpha &  5\alpha &  9\alpha &  4\alpha &  7\alpha &  2\alpha \\
1 & 4 & 9 & 5 & 3 & 3 & 5
\end{pmatrix},\\
\Tilde{G_2} & =
\begin{pmatrix}
\alpha^4 &  10\alpha^4 & \alpha^4 & \alpha^4  & \alpha^4  &   10 \alpha^4 &   10 \alpha^4 \\
\alpha^2  &  5 \alpha^2 &  4 \alpha^2  & 3\alpha^2 & 9\alpha^2 & 9\alpha^2 &  3\alpha^2
\end{pmatrix}.
\end{align*}

The  $L$-th truncated sliding matrix $G_L^c$ is
\begin{align*}
\Tilde{G_2}^c & =
\begin{pmatrix}
\Tilde{G_0} & \Tilde{G_1} & \Tilde{G_2} \\
    & \Tilde{G_0} & \Tilde{G_1}  \\
    &     & \Tilde{G_0}  \\
    \end{pmatrix}\in \F_{11}[\alpha]^{6\times 21}.\\
\end{align*}
The code $\C$ is MDP over $\F_{11^5}$. Now choose any $r>1$ such that $k/r=2$ and $\delta/r=4$ and consider the Galois ring $R=\mathrm{GR}(11^r,5)\cong \Z[z]/(11^r,f(z))$. Then, the residue field of $R$ is isomorphic to $\F_{11^5}$. Then the matrix $G(z) = G_0+G_1z+G_2z^2$ where 
$$G_i=\begin{bmatrix}
\tilde{G_i}\\
11 \cdot \tilde{G_i}\\
\vdots\\
11^{r-1}\cdot\tilde{G_i}
\end{bmatrix} $$ is the generator matrix of a $(7,2r,4r)$ MDP convolutional code over $R$.
\end{example}

\subsection{The special case of integer residue rings}

In this section, we want to explain why we need the construction idea that we presented in Theorem \ref{main} for the special case that $R=\mathbb Z_{p^r}$. Moreover, at the end of this section, we provide concrete constructions for this special case.

\begin{remark} \label{rem:embed}
There are different ways to embed $\mathbb F_p$ into $\mathbb Z_{p^r}$. One way is to use the fact that $\mathbb F_p$ is isomorphic to $p^{r-1}\mathbb Z_{p^r}\subset\mathbb Z_{p^r}$. Alternatively, one can identify $\mathbb F_p$ (as a set) with $\mathcal{A}_p$ (the Teichmüller set $\T$ for $\mathbb Z_{p^r}$). 
No matter which one of the two is used, one obtains 
$$L_{\mathbb Z_{p^r}}=\left\lfloor\frac{\delta}{k}\right\rfloor+\left\lfloor\frac{\left\lfloor\frac{\delta}{r}\right\rfloor}{n-\frac{k}{r}}\right\rfloor\leq\left\lfloor\frac{\delta}{k}\right\rfloor+\left\lfloor\frac{\delta}{n-k}\right\rfloor = L_{\mathbb F_p}$$
with equality if and only if $\delta<n-k$.
\end{remark}

In the following, we explain why the first way of embedding $\mathbb F_p$ into $\mathbb Z_{p^r}$ described in Remark~\ref{rem:embed} does not work for our aims. If we take the first option, i.e. multiply each entry of the generator matrix with $p^{r-1}$, we obtain $p^{r-1}\tilde{G}(z)\in\mathbb Z_{p^r}[z]^{n\times k}$. If we multiply any of the entries of this matrix by $p$, we obtain zero and hence, the $p$-span of its rows equals the span of its rows and consequently, the $p$-span of $p^{r-1}\tilde{G}(z)$ is a $\mathbb Z_{p^r}[z]$-submodule of $\mathbb Z_{p^r}[z]^{n}$, i.e. a convolutional code (alternatively, one can see this because the rows of $p^{r-1}\tilde{G}(z)$ form a $p$-generator sequence). Moreover, the rows of $p^{r-1}\tilde{G}(z)$ form a $p$-basis of this submodule (as the rows of $\tilde{G}(z)$ are linearly independent in $\mathbb F_p$), i.e. $p^{r-1}\tilde{G}(z)$ is a $p$-encoder of the corresponding convolutional code.
However, this convolutional code over $\mathbb Z_{p^r}$ cannot be MDP. To see this, take the last $\frac{k}{r}(L_{\mathbb Z_{p^r}}+1)$ columns of $p^{r-1}\tilde{G}_{L_{\mathbb Z_{p^r}}}^c$ and consider a $p$-linear combination of the corresponding parts of the rows of $p^{r-1}\tilde{G}_{L_{\mathbb Z_{p^r}}}^c$ of the form $$a_1(z)p^{r-1}\tilde{g}_1(z)+\cdots+a_{k(L_{\mathbb Z_{p^r}}+1)}(z)p^{r-1}\tilde{g}_{k(L_{\mathbb Z_{p^r}}+1)}(z)$$ with $a_i(z)\in\mathcal{A}_p(z)$ and $\tilde{g}_i(z)\in\mathcal{A}_p(z)^{1\times \frac{k}{r}(L_{\mathbb Z_{p^r}}+1)}$. This linear combination is equal to zero if and only if $a_1(z)\tilde{g}_1(z)+\cdots+a_{k(L_{\mathbb Z_{p^r}}+1)}(z)\tilde{g}_{k(L_{\mathbb Z_{p^r}}+1)}(z)$ is a multiple of $p$ and hence the projection to $\mathbb F_p[z]$ is zero. As the $\tilde{g}_i(z)$ are $k(L_{\mathbb Z_{p^r}}+1)$ vectors of length $\frac{k}{r}(L_{\mathbb Z_{p^r}}+1)$, (for $r>1$) they cannot be linearly independent in $\mathbb F_p[z]$, and hence the corresponding rows in $p^{r-1}\mathbb Z_{p^r}$ cannot be $p$-linearly independent. In follows with Theorem \ref{general} that the convolutional code with $p$-encoder $p^{r-1}\tilde{G}(z)$ cannot be MDP.

Now consider the second option to lift a generator matrix of an MDP code over $\mathbb F_p[z]$ to $\mathbb Z_{p^r}[z]$. In this case, the rows of the generator matrix generate a $\mathbb Z_{p^r}[z]$-submodule of $\mathbb Z_{p^r}[z]$, i.e. a convolutional code, but they form no $p$-generator sequence. This is true because the last row of a $p$-generator sequence has to be a multiple of $p$, i.e. would be zero in $\mathbb F_p$, which contradicts the MDP property.

The construction of Theorem \ref{main} overcomes these problems by viewing the generator matrix of an MDP convolutional code over $\mathbb F_p$ as a polynomial matrix over $\mathbb Z_{p^r}$ and constructing a $p$-encoder out of it.

\medskip 
To construct (reverse) MDP convolutional codes over $\mathbb Z_{p^r}$ with the aid of Theorem \ref{main}, one first needs a construction for (reverse) MDP convolutional codes over prime fields. The following theorem presents - to the best of our knowledge - the only general (i.e. for all code parameters) construction for (reverse) MDP convolutional codes over prime fields using the generator matrix of the code. 
In \cite{li17} this theorem is written in terms of the parity-check matrix and we translate it to the case of generator matrices here.

\begin{theorem}\cite{li17}
  Let $n,k,\delta\in\mathbb N$ with $k<n$ and $k\mid\delta$ and $m=\frac{\delta}{k}$ as well as $p\in\mathbb P$.\\ Then, $G(z)=\sum_{i=0}^{m}G_iz^i\in\mathbb F_p^{k\times n}[z]$ with

$$G_0=\left[\begin{array}{ccccc}
              \binom{m n+n-k}{n-k} & \hdots & 1 &  & 0 \\
              \vdots &   &  & \ddots &  \\
              \binom{m n+n-k}{n-1}  &  & \hdots  &  & 1
            \end{array}\right],$$

$$\hspace{3cm} G_i=\left[\begin{array}{ccc}
              \binom{m n+n-k}{(i+1)n-k} & \hdots & \binom{m n+n-k}{in-k+1} \\
              \vdots &  & \vdots \\
              \binom{m n+n-k}{(i+1)n-1} & \hdots & \binom{m n+n-k}{in}
            \end{array} \right] \mbox{ for } i=1,\hdots,m-1,$$

$$G_{m}=\left[\begin{array}{ccccc}
                  1 &  & \hdots &  & \binom{m n+n-k}{n-1} \\
                    & \ddots &  &  & \vdots \\
                  0 &  & 1 & \hdots & \binom{m n+n-k}{n-k}
                \end{array}\right]$$
              is the generator matrix of an $(n,k,\delta)$ reverse MDP convolutional code if
              $$p>\binom{m n+n-k}{\lfloor (m n+n-k)/2\rfloor}^{k(L+1)}\cdot(k(L+1))^{k(L+1)/2},$$
              where $L=\frac{\delta}{k}+\lfloor\frac{\delta}{n-k}\rfloor$.
            \end{theorem}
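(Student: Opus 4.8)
The plan is to deduce this theorem directly from the construction in Theorem~\ref{main} combined with the known result of \cite{li17} over prime fields, so almost no new work is required; the only thing to verify is that the hypotheses line up. First I would recall that the stated matrix $G(z)=\sum_{i=0}^{m}G_iz^i$ is, by \cite{li17} (translated to generator-matrix form), the generator matrix of an $(n,k,\delta)$ \emph{reverse} MDP convolutional code over $\mathbb F_p$ precisely when the displayed bound on $p$ in terms of the central binomial coefficient $\binom{mn+n-k}{\lfloor(mn+n-k)/2\rfloor}$ and $k(L+1)$ holds, with $L=\frac{\delta}{k}+\lfloor\frac{\delta}{n-k}\rfloor$. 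This is exactly the conclusion we want, so the proof is essentially a matter of citing that source; the statement above is just the reformulation of the parity-check version from \cite{li17} in terms of the generator matrix, and the passage from one to the other is the standard duality for MDP/reverse MDP codes over fields.

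The one genuinely checkable point is the \emph{reverse} part: one must confirm that the particular Toeplitz/reversed structure of $G_0,\dots,G_m$ written above — where $G_m$ is the ``reversal'' of $G_0$ and $G_{m-i}$ involves the same binomial entries as $G_i$ read in reverse order — is the one that makes both $G_L^c$ and $\overline G_L^c$ satisfy the full-rank minor condition. I would spell this out by noting that $\overline{g_{ij}(z)}=z^{\delta_j}g_{ij}(z^{-1})$ with all column degrees $\delta_j=m$, so $\overline G(z)=G_m+G_{m-1}z+\cdots+G_0z^m$, and then observe that the explicit entries have been chosen symmetrically so that $\overline G(z)$ has the same shape as $G(z)$ with the roles of $G_0$ and $G_m$ swapped; hence the hypothesis of \cite{li17} — which is symmetric in this sense — applies simultaneously to $\mathcal C$ and $\overline{\mathcal C}$. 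The binomial identity $\binom{mn+n-k}{j}=\binom{mn+n-k}{mn+n-k-j}$ is what underlies this symmetry, and I would point to it explicitly.

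The main (and really only) obstacle is bookkeeping: making sure the indices in the displayed $G_0$, $G_i$, $G_m$ really do match the parity-check description in \cite{li17} after dualizing, and that the quantity $L$ appearing in the exponent of $p$ is the same $L$ that governs the MDP window for these parameters (here, since $k\mid\delta$, one has $L=\frac{\delta}{k}+\lfloor\frac{\delta}{n-k}\rfloor$, consistent with the field case $L_{\mathbb F_p}$ discussed in Remark~\ref{rem:embed}). Once that identification is made, the theorem follows immediately. I would therefore present the proof as: (i) state that by \cite{li17} the parity-check matrix dual to $G(z)$ defines a reverse MDP code under the given bound on $p$; (ii) invoke the generator/parity-check duality for reverse MDP convolutional codes over fields to transfer the property to $G(z)$ itself; (iii) conclude. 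No part of this requires grinding through the superregularity estimates again, since those are precisely what \cite{li17} already established; our contribution here is only the restatement in a form usable as input to Theorem~\ref{main}.
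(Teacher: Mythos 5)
Your proposal matches the paper's treatment: the paper gives no proof of this theorem at all, but simply cites \cite{li17} and notes that the construction there is stated via the parity-check matrix and is being translated to generator-matrix form, which is exactly the citation-plus-translation argument you outline. Your additional remarks on the binomial symmetry $\binom{mn+n-k}{j}=\binom{mn+n-k}{mn+n-k-j}$ underlying the reverse structure are a reasonable sanity check but go beyond what the paper records.
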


\begin{remark}
We can observe the following:
\begin{itemize}
    \item [(i)] From Theorem \ref{main}, we know that in order to construct an $(n,k,\delta)$ (reverse) MDP convolutional code over the ring $\mathbb Z_{p^r}$, we need to construct an $(n,\frac{k}{r},\frac{\delta}{r})$ (reverse) MDP convolutional code over the finite field $\mathbb F_p$. However, for the sake of readability, in the previous theorem, we wrote down the construction over $\mathbb F_p$ for a code with parameters $(n,k,\delta)$.
    \item[(ii)] The lower bound for $p$ in the previous theorem is far from being strict and for concrete examples, one can usually work with much smaller field sizes as illustrated in the following example.
\end{itemize}
\end{remark}

\begin{example}
For the construction of an $(n,\frac{k}{r},\frac{\delta}{r})$ reverse MDP convolutional code over $\mathbb F_p$ with parameters $n=3$, $k=2$, $\delta=2$ and $r=2$, one has $m=1$ and $L=1$. Therefore, in the preceding theorem, the lower bound for $p$ is equal to $200$. However, with this construction, one obtains $G_0=[10\ 5\ 1]$ and $G_1=[1\ 5\ 10]$. One can easily check that the corresponding code is already reverse MDP over $\mathbb F_7$ (identifying $10$ with $3$).
\end{example}

\subsection{Superregular matrices}
In this last section we introduce the  notion of (reverse) $\gamma$-superregular matrices over a finite chain ring $R$, which is the natural generalization of superregular matrices. Superregular matrices with elements in a finite field appear in the context of coding theory, especially for the construction of MDP convolutional codes; we refer to \cite{al16,stronglymds} for a formal definition and details on the relation between superregular matrices and MDP convolutional codes.

Also in this case, we consider $R$ to be a finite chain ring, whose unique maximal ideal is $\mathfrak{m}=\langle \gamma \rangle$, with nilpotency index $\nu$. 
\begin{definition}
Let $\ell$ be a positive integer. A matrix $T$ of the form
\begin{align*}
T=(a_{ij})=\begin{bmatrix}
a_1 & a_{2} & \cdots & a_{{\ell}} \\
a_{-2} & a_1 & \ddots & \vdots \\
\vdots & \ddots & \ddots &  a_{2} \\
a_{-\ell} & \cdots & a_{-2} &  a_1 \\
\end{bmatrix} \in R^{\ell \times \ell}
\end{align*}
is called a \textbf{Toeplitz matrix}.
\end{definition}

Let $A \in R^{k\times n}$. Let $I={i_1,\dots,i_{\ell}}$ and $J={j_1,\dots,j_s}$. We denote by $A^I_J$ the $\ell \times s$ submatrix of $A$ obtained from $A$ by picking the rows with indices $i_1,\dots,i_\ell$ and the columns with indices $j_1,\dots,j_s$.

\begin{definition}
Consider an upper triangular Toeplitz matrix
\begin{align*}
A = 
\begin{bmatrix}
a_1 & a_2 & \cdots & a_{\ell} \\
0 & a_1 & \ddots & \vdots \\
\vdots & \ddots & \ddots &  a_2 \\
0 & \cdots & 0 &  a_1 \\
\end{bmatrix} \in R^{{\ell}\times {\ell}}.
\end{align*}
Let $I = \{i_1,\dots,i_s\}$ be an ordered set of row indices and let $J = \{j_1,\dots,j_s\}$ be an ordered set of column indices. The square $s \times s$ submatrix $A^I_J$ is called \textbf{proper} if for all $m \in \{1,\dots, \ell \}$ the inequality $i_m \leq j_m$ holds. This means that proper submatrices of $A$ are exactly those that have the chance that their determinant is nonzero.
\end{definition}

\begin{definition}
Let $A$ be a $\ell \times \ell$ upper triangular Toeplitz matrix
\begin{align*}
    A = \begin{bmatrix}
a_1 & a_2 & \cdots & a_{\ell} \\
0 & a_1 & \ddots & \vdots \\
\vdots & \ddots & \ddots &  a_2 \\
0 & \cdots & 0 &  a_1 \\
\end{bmatrix}
\end{align*}
where $a_i \in R$ for $i=1,2\dots, \ell$. $A$ is said to be \textbf{$\gamma$-superregular} if the determinant of every proper submatrix of $A$ is a unit in $R$.\\
$A$ is called \textbf{reverse $\gamma$-superregular} if $A$ is $\gamma$-superregular and also the reverse matrix
\begin{align*}
    A_{rev} = \begin{bmatrix}
a_{\ell} & a_{l-1} & \cdots & a_{1} \\
0 & a_{\ell} & \ddots & \vdots \\
\vdots & \ddots & \ddots &  a_{\ell-1} \\
0 & \cdots & 0 &  a_{\ell} \\
\end{bmatrix}
\end{align*}
is $\gamma$-superregular.
\end{definition}

Over finite field the definition of $\gamma$-superregularity is equivalent to saying that every proper submatrix of $A$ has nonzero determinant, which indeed coincides with the notion of superregular matrices.

Let $\pi : R \longrightarrow R/\mathfrak{m}$ be the canonical projection modulo the maximal ideal of $R$. We have the following easy result.

\begin{proposition} \label{superregularimpliespsuperregular}
$A\in R^{n\times n}$ is superregular if and only if $\pi(A)\in  R/\mathfrak{m}$ is superregular.
\end{proposition}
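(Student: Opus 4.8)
The statement to prove is Proposition~\ref{superregularimpliespsuperregular}: for $A\in R^{n\times n}$, $A$ is ($\gamma$-)superregular if and only if $\pi(A)$ is superregular over $R/\mathfrak{m}$. The key observation is that $\gamma$-superregularity is defined purely in terms of the determinants of proper submatrices, and that for a matrix over a finite chain ring $R$, an element is a unit if and only if it lies outside the maximal ideal $\mathfrak{m}=\langle\gamma\rangle$, i.e.\ if and only if its image under $\pi$ is nonzero in the field $R/\mathfrak{m}$. So the whole proof is just a translation of ``unit'' to ``nonzero image''.

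First I would recall the definitions: $A$ (an upper triangular Toeplitz matrix) is $\gamma$-superregular precisely when $\det(A^I_J)\in R^\ast$ for every proper submatrix $A^I_J$, and $\pi(A)$ is superregular precisely when $\det(\pi(A)^I_J)\neq 0$ in $R/\mathfrak{m}$ for every proper submatrix (using that, over a field, $\gamma$-superregularity coincides with the classical notion, as noted in the paragraph preceding the proposition). Note also that the property of being a proper submatrix depends only on the index sets $I,J$, not on the entries, so the collections of proper submatrices of $A$ and of $\pi(A)$ are indexed by exactly the same pairs $(I,J)$, and moreover $\pi(A)^I_J = \pi(A^I_J)$.

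Next I would use the two standard facts about finite chain rings. One: since $R$ is local with maximal ideal $\mathfrak{m}$, an element $a\in R$ is a unit if and only if $a\notin\mathfrak{m}$, equivalently $\pi(a)\neq 0$. Two: the ring homomorphism $\pi:R\to R/\mathfrak{m}$ commutes with determinants, $\pi(\det M)=\det(\pi(M))$ for any square matrix $M$ over $R$, because the determinant is a polynomial expression in the entries. Combining these: $\det(A^I_J)\in R^\ast \iff \pi(\det(A^I_J))\neq 0 \iff \det(\pi(A^I_J))\neq 0 \iff \det(\pi(A)^I_J)\neq 0$. Quantifying over all proper submatrices $(I,J)$ then gives the equivalence ``$A$ is $\gamma$-superregular'' $\iff$ ``$\pi(A)$ is superregular'', which is exactly the claim.

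There is essentially no obstacle here; the only thing to be mildly careful about is making explicit that the index-set condition defining ``proper'' is entry-independent, so that the same family of submatrix positions is used on both sides, and that we are implicitly identifying ``superregular over a field'' with the $\gamma$-superregular definition specialized to $R/\mathfrak{m}$ (for which $\gamma$ is $0$, so every proper submatrix must simply have nonzero determinant). Once those identifications are stated, the proof is a one-line chain of equivalences. I would write it out in roughly that form.
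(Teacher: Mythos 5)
Your proposal is correct and follows essentially the same argument as the paper: both reduce the claim to the facts that an element of the local ring $R$ is a unit iff its image under $\pi$ is nonzero, and that $\pi$ commutes with determinants, applied to each proper submatrix. Your additional remark that the ``proper'' condition depends only on the index sets is a minor but welcome clarification that the paper leaves implicit.
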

\begin{proof}
Let $\Tilde{A}$ be a proper submatrix of $A$ in $R$. Then  $\Tilde{A}$ is invertible in $R$ if and only if $\det(\Tilde{A})$ is a unit in $R$. This is equivalent to say that $\pi(\det(\Tilde{A}))\ne 0$ in $R/\mathfrak{m}$, but  $\pi(\det(\Tilde{A}))=\det(\pi(\Tilde{A}))$, which means that $\pi(\Tilde{A})$ is invertible in $R/\mathfrak{m}$.
\end{proof}

We extend a $\gamma$-superregular matrix so that its rows form a $\gamma$-generator sequence, because by Theorem \ref{pgenseq} the rows of a truncated sliding generator matrix of a convolutional code over $R$ always form a $\gamma$-generator sequence. When doing this, we need to make sure that condition (b) of \ref{general} is satisfied. Here, Proposition \ref{plinindep} comes into play since it gives a connection between $\gamma$-superregular matrices and $\gamma$-linearly independent rows.

 The following proposition shows how to obtain a (reverse) $\gamma$-superregular upper triangular block Toeplitz matrix from a (reverse) $\gamma$-superregular upper triangular Toeplitz matrix. In \cite{virtu2012}, this construction has been provided only over finite fields. Here, for the sake of completeness, we only state the following ring-version of this result.

\begin{proposition} \label{extractrows}
Let $A$ be an $\ell \times \ell$ (reverse) $\gamma$-superregular matrix defined over $R$ with $\ell := (L+1)(n+k-1)$. For $j = 0,1,\dots,L$ let $I_j$ and $J_j$ be defined as follows:
\begin{align*}
    I_j &:= \{jn + j(k-1)+1, jn + j(k-1)+2, \dots, (j+1)n + j(k-1)\} \\
    J_j &:= \{(j+1)n + j(k-1), (j+1)n + j(k-1)+1, \dots, (j+1)(n+k-1)\}
\end{align*}
and denote by $I$ and $J$ the union of these sets, i.e.
\begin{align*}
    I = \bigcup_{j=0}^L I_j, \; \; J = \bigcup_{j=0}^L J_j.
\end{align*}
Then $\Tilde{A} = A_I^J$ (and $\tilde{A}_{rev}$) are $(L+1)k \times (L+1)n$ upper block triangular submatrices of $A$ (and $A_{rev}$) such that every $(L+1)k \times (L+1)k$ full size minor of $\Tilde{A}$ (and $\tilde{A}_{rev}$) formed from the columns with indices $1 \leq t_1 < \dots < t_{(L+1)k}$, where $t_{sk+1} > sn$ for $s = 1, 2, \dots L$ is a unit in $R$.
\end{proposition}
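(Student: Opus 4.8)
The plan is to reduce everything to the field case, which is exactly the content of \cite{virtu2012}, and then lift back using Proposition~\ref{superregularimpliespsuperregular}. First I would observe that the combinatorial bookkeeping in the statement — the definition of the sets $I_j$, $J_j$, their cardinalities, the fact that $\tilde A = A_I^J$ is upper block triangular of the claimed size, and the precise column pattern $t_{sk+1}>sn$ — is purely about indices and does not involve the ring at all; hence it can be imported verbatim from the finite-field construction of \cite{virtu2012}. The only genuinely ring-theoretic point is the passage from ``proper minors of $A$ are units'' to ``the relevant full-size minors of $\tilde A$ are units''. So the core of the argument is: a full-size $(L+1)k\times(L+1)k$ minor of $\tilde A$ selected as in the statement is, up to relabelling of rows and columns, a \emph{proper} minor of the upper triangular Toeplitz matrix $A$ (this is where the condition $t_{sk+1}>sn$ together with the block-triangular shifts guarantees $i_m\le j_m$ for every $m$), and by $\gamma$-superregularity of $A$ every proper minor is a unit in $R$.

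Concretely, I would proceed as follows. Step 1: check that $|I_j| = k$ and $|J_j| = k$ for each $j$ (from $(j+1)n+j(k-1) - (jn+j(k-1)) = n$ one sees $|I_j|=n$; wait — one must recompute: $I_j$ has $n$ elements and $J_j$ has $k$ elements, matching the sizes $(L+1)k\times(L+1)n$ of $\tilde A$ only after transposing the roles, so I would carefully match the paper's convention $A_I^J$ meaning rows indexed by $I$, columns by $J$), so that $\tilde A$ indeed has $(L+1)k$ rows via $J$ and $(L+1)n$ columns via $I$; and verify that consecutive blocks $I_j$ are shifted by exactly $n+k-1$, which is what makes $\tilde A$ block Toeplitz. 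Step 2: verify the block-triangular shape, i.e.\ that for $i<j$ the block $A_{I_j}^{J_i}$ is identically zero, which follows because $A$ is upper triangular and the index ranges are separated by the right amount. Step 3: take any full-size minor of $\tilde A$ using columns $t_1<\dots<t_{(L+1)k}$ with $t_{sk+1}>sn$; translate these back to a set of actual column indices of $A$ and pair them with the $(L+1)k$ row indices coming from $J$; show the resulting square submatrix of $A$ is proper, i.e.\ its $m$-th selected row index is $\le$ its $m$-th selected column index. Step 4: invoke the definition of $\gamma$-superregular to conclude the determinant is a unit. Step 5: repeat verbatim for $A_{rev}$ to get the statement for $\tilde A_{rev}$, using that reverse $\gamma$-superregularity means $A_{rev}$ is itself $\gamma$-superregular.

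The main obstacle — and the only place where any real work happens — is Step 3: showing that the inequality constraints $t_{sk+1}>sn$ on the column selection, once transported through the block structure of $\tilde A$, translate precisely into the ``properness'' inequalities $i_m\le j_m$ for the corresponding submatrix of the underlying Toeplitz matrix $A$. This is a careful but elementary index-chasing argument; in the field setting it is exactly the heart of the corresponding proposition in \cite{virtu2012}, and since it involves no ring structure it transfers unchanged. Given that, the ring-theoretic conclusion is immediate from the equivalence ``determinant is a unit $\iff$ submatrix is invertible'' in the local ring $R$, i.e.\ from the definition of $\gamma$-superregularity together with Proposition~\ref{superregularimpliespsuperregular}. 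I would therefore keep the write-up short, citing \cite{virtu2012} for the combinatorial part and only spelling out the (trivial) observation that ``nonzero determinant over $R/\m$'' is replaced by ``unit determinant over $R$'' throughout.
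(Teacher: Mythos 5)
Your proposal is correct and is precisely the argument the paper relies on: the authors in fact give no proof at all, stating the proposition as the ``ring-version'' of the finite-field result of \cite{virtu2012}, so the substance of any proof is exactly your reduction --- the index bookkeeping and the properness of the selected full-size minors are ring-independent and imported from \cite{virtu2012}, after which $\gamma$-superregularity upgrades ``nonzero determinant'' to ``unit determinant'' in $R$. Nothing further is needed.
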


We conclude this section by providing an example for the construction of an MDP and reverse MDP convolutional code via $\gamma$-superregular and reverse $\gamma$-superregular matrices.
In order to be able to use the preceding proposition as well as Theorem \ref{general}, we need to assume that $\nu|k$, $k|\delta$ and $\nu n-k > \delta$, i.e. $L=\frac{\delta}{k}=\deg(G(z))$.

\begin{example}
We want to construct a reverse MDP $(3,2,2)$-convolutional code over $\mathbb{Z}_{11^2}$, i.e. $n = 3, \; \; k = 2, \; \; \delta=2, \; \; p=\gamma=11, \; \;r=\nu=2$.

Note that $\nu|k$, $k|\delta$ and $\nu n-k > \delta$ and we have $L=\frac{\delta}{k} =1$.
We start by constructing a reverse $\ell \times \ell$ $\gamma$-superregular matrix over $\mathbb{Z}_{11^2}$ where $\ell = (L+1)(n+k-1) = 6$. Using a computer algebra program one can check that the following $6 \times 6$ matrix $T$ is reverse superregular over $\mathbb{Z}_{11}$:
\begin{align*}
T=\begin{bmatrix}
    1 & 2 & 1 & 1 & 3 & 4 \\
    0 & 1 & 2 & 1 & 1 & 3 \\
    0 & 0 & 1 & 2 & 1 & 1 \\
    0 & 0 & 0 & 1 & 2 & 1 \\
    0 & 0 & 0 & 0 & 1 & 2 \\
    0 & 0 & 0 & 0 & 0 & 1
\end{bmatrix}
\end{align*}
and therefore it is in particular reverse $\gamma$-superregular over $\mathbb{Z}_{11^2}$. Applying Theorem \ref{extractrows} to this matrix yields $I_0 = \{1, 2, 3\}, I_1 = \{4, 5, 6\}, J_0 = \{3\}, J_1= \{6\}$ and therefore 
\begin{align*}
    \Tilde{G}_L^c=
    \begin{bmatrix}
    1 & 2 & 1 & 1 & 3 & 4 \\
    0 & 0 & 0 & 1 & 2 & 1 
    \end{bmatrix}
    =
    \begin{bmatrix}
    \Tilde{G}_0 & \Tilde{G}_1 \\
     & \Tilde{G}_0
    \end{bmatrix}.
\end{align*}
As a next step we extend $\Tilde{G}_L^c$ to $G_L^c$.
\begin{align*}
    G_L^c=
    \begin{bmatrix}
    \Tilde{G}_0 & \Tilde{G}_1 \\
    11\cdot\Tilde{G}_0 & 11\cdot\Tilde{G}_1 \\
     & \Tilde{G}_0 \\
     & 11\cdot \Tilde{G}_0
    \end{bmatrix}
    =
    \begin{bmatrix}
    G_0 & G_1 \\
     & G_0
    \end{bmatrix}
    =
    \begin{bmatrix}
    1 & 2 & 1 & 1 & 3 & 4 \\
    11 & 22 & 11 & 11 & 33 & 44 \\
    0 & 0 & 0 & 1 & 2 & 1 \\
    0 & 0 & 0 & 11 & 22 & 11 
    \end{bmatrix}.
\end{align*}
By construction, 

\begin{align*}
    G(z) = G_0 + G_1z = 
    \begin{bmatrix}
        1 + z & 2 + 3z & 1 + 4z \\
        11+11z & 22 + 33z & 11 + 44z
    \end{bmatrix}
\end{align*}
is a $\gamma$-encoder of a reverse MDP $(3, 2, 2)$-convolutional code over $\mathbb{Z}_{11^2}$.
\end{example}

\bigskip

\section{Acknowledgements}
This work was partially supported by  Swiss National Science Foundation grant n. 188430. The authors are thankful to Alessandro Neri for fruitful comments and discussion and to the anonymous reviewers whose comments contributed to improve the paper.

\bigskip

\bibliographystyle{abbrv}

\bibliography{references}

\begin{thebibliography}{10}

\bibitem{alfarano2020weighted}
G.~N. Alfarano, D.~Napp, A.~Neri, and V.~Requena.
\newblock Weighted {R}eed-{S}olomon convolutional codes.
\newblock {\em arXiv preprint arXiv:2012.11417}, 2020.

\bibitem{al16}
P.~Almeida, D.~Napp, and R.~Pinto.
\newblock Superregular matrices and applications to convolutional codes.
\newblock {\em Linear Algebra Appl.}, 499:1--25, 2016.

\bibitem{baldini1987systematic}
R.~Baldini, A.~Pessoa, and D.~Arantes.
\newblock Systematic linear codes over a ring for encoded phase modulation.
\newblock In {\em Int. Symposium on Inform. and Coding Theory (ISICT 87)},
  1987.

\bibitem{baldini1990coded}
R.~Baldini~Filho and P.~G. Farrell.
\newblock Coded modulation with convolutional codes over rings.
\newblock In {\em International Symposium on Coding Theory and Applications
  (held in Europe)}, pages 271--280. Springer, 1990.

\bibitem{bini2012finite}
G.~Bini and F.~Flamini.
\newblock {\em Finite commutative rings and their applications}, volume 680.
\newblock Springer Science \& Business Media, 2012.

\bibitem{blake1972codes}
I.~F. Blake.
\newblock Codes over certain rings.
\newblock {\em Inf. Control.}, 20(4):396--404, 1972.

\bibitem{blake1975codes}
I.~F. Blake.
\newblock Codes over integer residue rings.
\newblock {\em Inf. Control.}, 29(4):295--300, 1975.

\bibitem{calderbank1995modular}
A.~R. Calderbank and N.~J. Sloane.
\newblock Modular and $p$-adic cyclic codes.
\newblock {\em Des. Codes, Cryptogr.}, 6(1):21--35, 1995.

\bibitem{dinh2004cyclic}
H.~Q. Dinh and S.~R. L{\'o}pez-Permouth.
\newblock Cyclic and negacyclic codes over finite chain rings.
\newblock {\em IEEE Trans.\ Inform.\ Theory}, 50(8):1728--1744, 2004.

\bibitem{mdsoverring}
M.~El~Oued and P.~Sol{\'e}.
\newblock {MDS} convolutional codes over a finite ring.
\newblock {\em IEEE Trans.\ Inform.\ Theory}, 59(11):7305--7313, 2013.

\bibitem{fagnani2001system}
F.~Fagnani and S.~Zampieri.
\newblock System-theoretic properties of convolutional codes over rings.
\newblock {\em IEEE Trans.\ Inform.\ Theory}, 47(6):2256--2274, 2001.

\bibitem{feng2014communication}
C.~Feng, R.~W. N{\'o}brega, F.~R. Kschischang, and D.~Silva.
\newblock Communication over finite-chain-ring matrix channels.
\newblock {\em IEEE Trans.\ Inform.\ Theory}, 60(10):5899--5917, 2014.

\bibitem{stronglymds}
H.~Gluesing-Luerssen, J.~Rosenthal, and R.~Smarandache.
\newblock {S}trongly-{MDS} {C}onvolutional {C}odes.
\newblock {\em IEEE Trans.\ Inform.\ Theory}, 52(2):584--598, 2006.

\bibitem{Gruica}
A.~Gruica.
\newblock {MDP} convolutional codes over $\mathbb{Z}_{p^r}$.
\newblock Master's thesis, University of Zurich, available at
  \url{https://www.math.uzh.ch/index.php?id=pmastertheses&key1=604}, 2020.

\bibitem{honold2000linear}
T.~Honold and I.~Landjev.
\newblock Linear codes over finite chain rings.
\newblock {\em Electron. J. Comb.}, 7:R11--R11, 2000.

\bibitem{hutchinson2009existence}
R.~Hutchinson.
\newblock The existence of strongly mds convolutional codes.
\newblock {\em SIAM J Control Optim.}, 47(6):2812--2826, 2009.

\bibitem{minimalringencoders}
M.~Kuijper and R.~Pinto.
\newblock On minimality of convolutional ring encoders.
\newblock {\em IEEE Trans.\ Inform.\ Theory}, 55(11):4890--4897, 2009.

\bibitem{kuijper2007predictable}
M.~Kuijper, R.~Pinto, and J.~W. Polderman.
\newblock The predictable degree property and row reducedness for systems over
  a finite ring.
\newblock {\em Linear Algebra Appl.}, 425(2-3):776--796, 2007.

\bibitem{li17}
J.~Lieb.
\newblock Complete {MDP} convolutional codes.
\newblock {\em J. Algebra its Appl.}, 18(06):1950105, 2019.

\bibitem{massey1989convolutional}
J.~Massey.
\newblock Convolutional codes over rings.
\newblock In {\em Fourth Joint Swedish-Soviet International Workshop on
  Information Theory}, 1989.

\bibitem{mcdonald1974finite}
B.~R. McDonald, B.~R. McDonald, et~al.
\newblock {\em Finite rings with identity}, volume~28.
\newblock Marcel Dekker Incorporated, 1974.

\bibitem{napp2021noncatastrophic}
D.~Napp, R.~Pinto, and C.~Rocha.
\newblock Noncatastrophic convolutional codes over a finite ring.
\newblock {\em J. Algebra Appl.}, page 2350029, 2021.

\bibitem{napp2017mds}
D.~Napp, R.~Pinto, and M.~Toste.
\newblock On {MDS} convolutional codes over $\mathbb{Z}_{p^r}$.
\newblock {\em Des. Codes Cryptogr.}, 83(1):101--114, 2017.

\bibitem{napp2018column}
D.~{Napp}, R.~{Pinto}, and M.~{Toste}.
\newblock Column distances of convolutional codes over $\mathbb{Z}_{p^r}$.
\newblock {\em IEEE Trans.\ Inform.\ Theory}, 65(2):1063--1071, 2019.

\bibitem{nechaev2008finite}
A.~A. Nechaev.
\newblock Finite rings with applications.
\newblock {\em Handbook of Algebra}, 5:213--320, 2008.

\bibitem{hammingrings}
G.~H. Norton and A.~S{\u{a}}l{\u{a}}gean.
\newblock On the {H}amming distance of linear codes over a finite chain ring.
\newblock {\em IEEE Trans.\ Inform.\ Theory}, 46(3):1060--1067, 2000.

\bibitem{norton2000structure}
G.~H. Norton and A.~S{\u{a}}l{\u{a}}gean.
\newblock On the structure of linear and cyclic codes over a finite chain ring.
\newblock {\em Appl. Algebra Eng. Commun. Comput.}, 10(6):489--506, 2000.

\bibitem{renner2020low}
J.~Renner, A.~Neri, and S.~Puchinger.
\newblock Low-rank parity-check codes over {G}alois rings.
\newblock {\em Des. Codes, Cryptogr.}, pages 1--36, 2020.

\bibitem{spiegel1978codes}
E.~Spiegel.
\newblock Codes over $\mathbb{Z}_m$, revisited.
\newblock {\em Inf. Control.}, 37(1):100--104, 1978.

\bibitem{virtu2012}
V.~Tom\'{a}s, J.~Rosenthal, and R.~Smarandache.
\newblock Decoding of convolutional codes over the erasure channel.
\newblock {\em IEEE Trans.\ Inform.\ Theory}, 58(1):90 --108, Jan. 2012.

\bibitem{toste2016distance}
M.~Toste.
\newblock {\em Distance Properties of Convolutional Codes over
  $\mathbb{Z}_{p^r}$}.
\newblock PhD thesis, Universidade de Aveiro (Portugal), 2016.

\end{thebibliography}

\end{document}